\newenvironment{sequation}{\begin{equation}\footnotesize}{\end{equation}}
\newenvironment{salgorithm}{\begin{algorithm}[!t]\footnotesize}{\end{algorithm}}
\newtheorem{definition}{Definition}
\newtheorem{theorem}{Theorem}[section]
\newtheorem{pruning rule}[theorem]{Pruning Rule}
\newcommand{\nop}[1]{}
\begin{document}
%
\title{Reducing Uncertainty of Schema Matching via Crowdsourcing with Accuracy Rates}
%
%
%
%

\author{Chen Jason Zhang,
        Lei Chen,~\IEEEmembership{Member,~IEEE,}
        H. V. Jagadish,~\IEEEmembership{Member,~IEEE,}
        Mengchen Zhang,
        and~Yongxin~Tong,~\IEEEmembership{Member,~IEEE,}
\IEEEcompsocitemizethanks{\IEEEcompsocthanksitem Chen Jason Zhang is with School of Computer Science and Technology, Shandong University of Finance and Economics, Jinan, Shandong, China and Department of Computer
Science and Engineering, the Hong Kong University of Science and
Technology, Kowloon, Hong Kong, SAR China\protect\\
E-mail: czhangad@cse.ust.hk
\IEEEcompsocthanksitem Lei Chen and Mengchen Zhang are with Department of Computer
Science and Engineering, the Hong Kong University of Science and Technology, Kowloon, Hong Kong, SAR China\protect\\
E-mail: leichen@cse.ust.hk, mzhangag@connect.ust.hk
\IEEEcompsocthanksitem H. V. Jagadish is with  Department of Electrical Engineering and
Computer Science, University of Michigan, USA\protect\\
E-mail: jag@eecs.umich.edu
\IEEEcompsocthanksitem Yongxin Tong is with State Key Laboratory of Software Development Environment, School of Computer Science and Engineering, Beihang University, Beijing, China.\protect\\
E-mail: yxtong@buaa.edu.cn}
\thanks{Manuscript received April 19, 2005; revised August 26, 2015.}}

%
%

\markboth{Journal of \LaTeX\ Class Files,~Vol.~14, No.~8, August~2015}%
{Shell \MakeLowercase{\textit{et al.}}: Bare Demo of IEEEtran.cls for Computer Society Journals}
%



\IEEEtitleabstractindextext{%
\begin{abstract}
Schema matching is a central challenge for data integration systems.
Inspired by the popularity and the success of crowdsourcing platforms, we explore the use of
crowdsourcing to reduce the uncertainty of schema matching. Since crowdsourcing platforms are most effective for simple questions, we
assume that each \textit{Correspondence Correctness
Question} (CCQ) asks the crowd to decide whether a given
correspondence should exist in the correct matching. Furthermore, members of a crowd may sometimes return incorrect answers with different probabilities. Accuracy rates of individual crowd workers are probabilities of returning correct answers which can be attributes of CCQs as well as evaluations of individual workers. We prove that uncertainty reduction equals to entropy of answers minus entropy of crowds and show how to obtain lower and upper bounds for it. We propose frameworks and efficient algorithms to dynamically manage the CCQs to maximize the uncertainty reduction within a limited budget of questions. We develop two novel approaches, namely ``Single CCQ'' and ``Multiple CCQ'', which
\textit{adaptively} select, publish and manage questions.  We verify
the value of our solutions with simulation and real implementation.
\end{abstract}

\begin{IEEEkeywords}
crowdsourcing, uncertainty reduction, schema matching
\end{IEEEkeywords}}

\maketitle

\IEEEdisplaynontitleabstractindextext

%
\IEEEpeerreviewmaketitle

\vspace{-1.5ex}
\IEEEraisesectionheading{\section{Introduction}}
\label{Into}
\vspace{-1ex}
\subsection{Background and Motivation}
\vspace{-1ex}
Schema matching refers to finding correspondences between elements
of two given schemata, which is a critical issue for many
database applications such as data integration, data warehousing,
and electronic commerce \cite{Rahm:2001:SAA:767149.767154}.
Figure~\ref{fig:schemas} illustrates a running example of the schema
matching problem: given two relational schemata $A$ and $B$ describing faculty
information, we aim to determine the correspondences (indicated by
dotted lines), which identify attributes representing the same
concepts in the two. There has been significant
work in developing automated algorithms for schema matching (please
refer to \cite{Rahm:2001:SAA:767149.767154} \cite{shvaiko2005survey} \cite{Bersteinsurvey2011} \cite{schemabook} for  comprehensive
surveys). Most approaches use linguistic, structural and
instance-based information. In general, it is still very difficult
to tackle schema matching completely with an algorithmic approach:
some ambiguity remains. This ambiguity is unlikely to be removed because it is
believed that typically ``the syntactic representation of schemata and data do not completely
convey the semantics of different databases''
\cite{DBLP:conf/vldb/MillerHH00}.

Given this inherent ambiguity, many schema matching tools will produce not just one matching, but rather a whole set of possible matchings.  In fact, there is even a stream of work dealing with
models of possible matchings, beginning with \cite{Dong:2007:DIU:1325851.1325930}.  The matching tool can produce a result similar to the upper part of Table
\ref{table:schemA}, with one matching per row, associated with a probability that it is the correct matching.
\begin{figure}[!t]
   \centerline{\psfig{figure=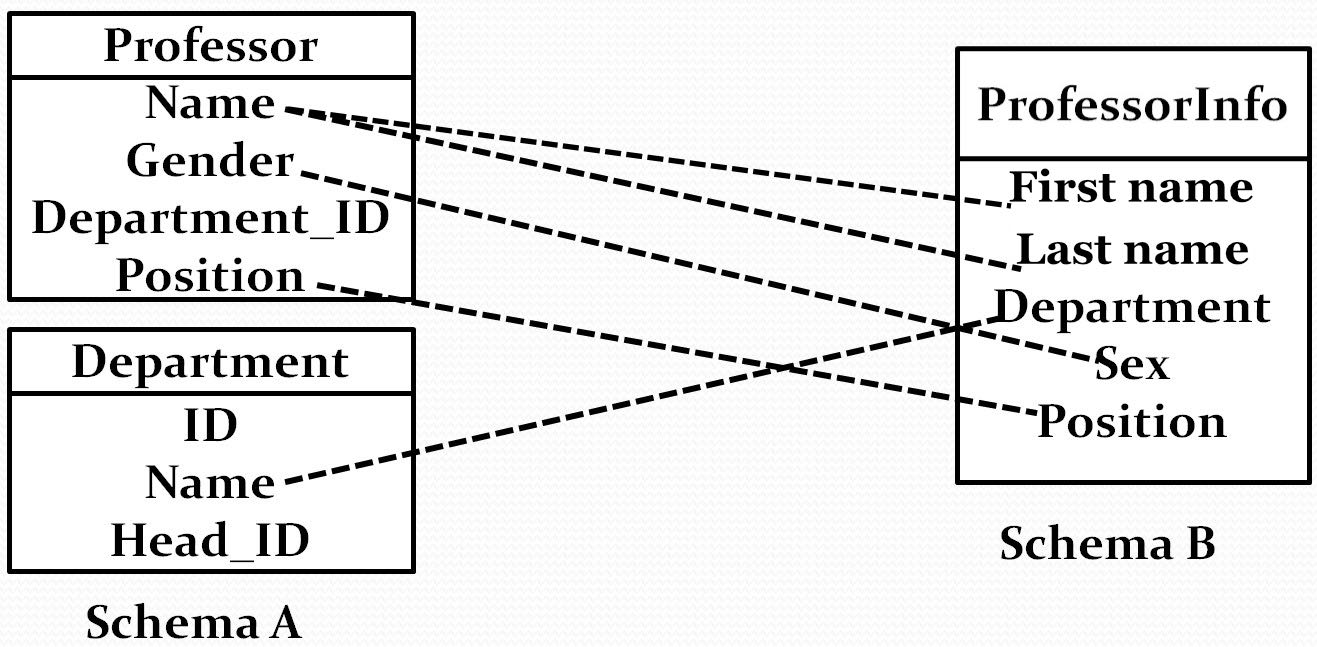,height=22mm} }
         \vspace{-3ex}
   \caption{\footnotesize{Example of Schema Matching Problem}}
   \label{fig:schemas}
      \vspace{-3ex}
\end{figure}

\begin{table}[!t]\caption{\footnotesize{Uncertain Schema Matching}}
\vspace{-3ex}
\label{two given schemas}
\footnotesize
\tabcolsep=0.01cm
\begin{tabular}{c c} 
\hline 
Possible Matchings & probability \\ 
$m_1$=\{ \textless(Professor)Name,[first name, last name] \textgreater,\\
\textless Position, Position\textgreater, \textless Gender,Sex\textgreater,& .45\\
\textless (Department) Name, Department\textgreater \} \\ 
$m_2$=\{ \textless(Professor)Name,[first name, last name] \textgreater,\\
\textless Gender, Sex\textgreater, \textless(Department) Name, Department\textgreater \}&.3 \\
$m_3$=\{ \textless(Department)Name, first name\textgreater, \textless Position, Position\textgreater\\
\textless Gender,Sex \textgreater \}&.25 \\
\end{tabular}\label{table:schemA} 
\begin{tabular}{c c} 
\hline 
Correspondence & probability \\ 
$c_1$=\textless(Professor)Name,[first name, last name] \textgreater&.75 \\
$c_2$=\textless Position, Position\textgreater&.7\\
$c_3$=\textless Gender,Sex \textgreater& 1 \\
$c_4$=\textless (Department) Name, Department\textgreater &.75\\
$c_5$=\textless (Department)Name,first name\textgreater &.25\\
\end{tabular}\label{table:schemB} 
\vspace{-6.1ex}
\end{table}

Given a set of possible matchings, one can create an integrated database that has uncertain data, and work with this using any of several systems that support \textit{probabilistic query processing} over uncertain data, such as \cite{DBLP:conf/sigmod/HuangAKO09}\cite{4812509}.
However, preserving the uncertainty complicates query processing and increases storage
cost.  So we would prefer to make choices earlier, if possible, and eliminate (or reduce) the uncertainty to be propagated.  It has been suggested \cite{Popa02translatingweb} that
\textbf{human insights are extremely conducive for reducing the uncertainty of schema matching},
so the correct matching can be manually chosen by the user from among the possible matchings offered by the system. In a traditional back-end database environment, where the human `user' is a DBA, setting up a new integrated database, such a system can work well.

However, in today's world, with end-users performing increasingly sophisticated data accesses, we have to support users who are interested, say, in combining data from two different web sources, and hence require an `ad hoc' schema matching.  Such users may not be experts, and will typically have little knowledge of either source schema. They may not even know what a schema is. They are also likely to have little patience with a system that asks them to make difficult choices, rather than just giving them the desired answer.  In other words, users may not themselves be a suitable source of human insight to resolve uncertainty in schema matching.

Fortunately, we have crowdsourcing technology as a promising option today. Many recent works, such as \cite{hung2013leveraging}, \cite{hung2014reconciling}, \cite{fan2014hybrid} and \cite{nguyen2014pay}, have suggested leveraging the crowd to improve schema matching. Platforms such as Amazon Mechanical Turk provide convenient access to crowds. The data concerning an explicit problem can be queried
by publishing questions, named Human Intelligent Tasks (a.k.a HITs).
The work-flow of publishing HITs can be automated with available
APIs (e.g. REST APIs) \cite{DBLP:conf/sigmod/FranklinKKRX11}.
To the extent that our end-user is not an expert, the opinion of a crowd of other non-experts is likely to be better than that of our end-user.
\vspace{-3ex}
\subsection{Problem Formulation and Contributions}
\vspace{-1ex}
It is well-known that crowdsourcing works best when tasks can be broken down into very simple pieces.  An entire schema matching task may be too large a grain for a crowd -- each individual may have small quibbles with a proposed matching, so that a simple binary question on the correctness of matchings may get mostly negative answers, with each user declaring it less than perfect.  On the other hand, asking open-ended questions is not recommended for a crowd, because it may be difficult to pull together a schema matching from multiple suggestions.   We address this challenge by posing to the crowd questions regarding individual correspondences for pairs of attributes, one from each schema being matched.   This much simpler question, in most circumstances, can be answered with a simple yes or no.
Of course, this requires that we build the machinery to translate between individual attribute correspondences and possible matchings.  Fortunately, this has been done before, in \cite{Dong:2007:DIU:1325851.1325930}, and is quite simple: since schema match options are all mutually exclusive, we can determine the probability of each correspondence by simply adding up the probabilities of matchings in which the correspondence holds.

Our problem then is to choose wisely the correspondences to ask the crowd to obtain the highest certainty of correct schema matching at the lowest cost.
For schema matching certainty, we choose entropy as our measure -- we are building our system on top of a basic schema-matching tool, which can estimate probabilities for schema matches it produces.  When the tool obtains a good match, it can associate a high probability.  When there is ambiguity or confusion, this translates into multiple lower probability matches, with associated uncertainty and hence higher entropy.

Our first algorithm, called Single CCQ (CCQ is short for Correspondence Correctness Question), determines the single most
valuable correspondence query to ask the crowd, given a set of
possible schema matchings and associated correspondences, all with
probabilities.

Intuitively, one may try a simple greedy approach, choosing the
query that reduces entropy the most.  However, there are three
issues to consider.  First, the correspondences are not all
independent, since they are related through candidate matchings.  So
it is not obvious that a greedy solution is optimal.  Second, even
finding the query that decreases entropy the most can be
computationally expensive.  Third, we cannot assume that every
person in the crowd answers every question correctly -- we have to
allow for wrong answers too.  We address all three challenges below.

Usually, we are willing to ask the crowd about more than one
correspondence, even if not all of them.   We could simply run
Single CCQ multiple times, each time greedily resolving uncertainty
in the most valuable correspondence.  However, we can do better. For
this purpose, we develop Multiple CCQ, an extension of Single CCQ,
that maintains $k$ most useful questions to ask the crowd, and
dynamically updates questions according to newly received answers.

In a previous conference paper \cite{DBLP:pvldb/ZhangCJC13}, we addressed this problem assuming crowds to be always correct. In this paper we consider more realistic situations: (1) Each CCQ has a probability to be answered correctly depending on the hardness of CCQ; (2) Each crowd worker has a probability to answer a CCQ correctly, which shows the trustworthiness of the worker. Therefore \cite{DBLP:pvldb/ZhangCJC13} can be viewed as a special case of our paper (probabilities all equal to 1). Combining above two situations together, we could compute the probabilities of CCQs to be answered correctly, and publish k CCQs chosen by our model to crowds with accuracy rates.

To summarize, we have made following contributions,

1. In Section~\ref{section:formalization} and Section~\ref{subsection:Multiple-CCQ-Form}, we propose an
entropy-based model to formulate the uncertainty reduction caused by
a single CCQ and multiple CCQs, respectively.

2. For the Single CCQ approach, we propose an explicit framework to choose a CCQ,
and derive an efficient algorithm in
Section~\ref{SCCQA}.  We introduce an index structure and pruning technique for efficiently
finding the Single CCQ.

3. In Section 3.3 and 4.3, we prove for both Single CCQ approach and Multiple CCQ approach that uncertainty reduction equals to entropy of answer minus entropy of crowds. In Section 3.3 we give the property of uncertainty reduction for Single CCQ approach. In Section 4.4 we obtain optimal upper and lower bounds for Multiple CCQ approach.

4. For the Multiple CCQ approach, we prove its NP-hardness in Section 4.5, and propose an
efficient $(1+\epsilon)$ approximation algorithm, with effective
pruning techniques in Section 4.6.

5. Section~\ref{Experiment} reports and discusses the
experimental study on both simulation and real implementation. We
review and compare our solutions with related work in
Section~\ref{related}. In Section~\ref{conclusion}, we conclude the paper and discuss future work.
\vspace{-3ex}

\section{Problem Statement}
\label{sec:problem}
\vspace{-1ex}
In this section, we give definitions related to the problem that
we are working on in this paper.\vspace{-1ex}
\begin{definition}[Correspondence]
Let \textit{S} and \textit{T} be two given schemata. A
correspondence $c$ is a pair $(A_s,A_t)$, where  $A_s$ and $A_t$ are
two subsets of attributes from \textit{S} and \textit{T} respectively.
\vspace{-1ex}
\end{definition}
\textbf{Remark:}
Here we consider correspondences between subsets of  \textit{S} and \textit{T}, which means $c$ could be not only 1:1 matching, but also n:m matching. For example in Table 1, $c_1$ is a 2:1 matching.
\vspace{-1ex}
\begin{definition}[Possible Matching]
Let \textit{S} and \textit{T} be two given schemata. Possible
matching $m_i=\{c_1,c_2,\dots,c_{|m_i|}\}$ is a set of correspondences between \textit{S}
and \textit{T} which satisfies that no attribute participate in more than one correspondence.
\vspace{-1ex}
\end{definition}
\textbf{Remark:}
For example, in Table~\ref{table:schemA}, $m_1$,$m_2$ and $m_3$ are
three possible matchings. Note that not every set of correspondences is a possible matching. In practice, possible matchings are generated by schema matching tools with probabilities to be correct.\vspace{-1ex}
\begin{definition}[Result Set]
\label{definition:RS}
For two given schemata \textit{S} and \textit{T}, let the result set
\textit{R} be the set of possible matchings generated by some
semi-automatic tool of schema matching, together with a probability
assignment function $\mathbb{P}:R\rightarrow [0,1]$. Each matching $ m_i
\in R$ has the probability $\mathbb{P}(m_i)$ to be correct, and we have $\sum_{m_i \in \textit{R}}{\mathbb{P}(m_i)}=1$
\vspace{-1ex}
\end{definition}
\textbf{Remark:}
In the example of Table 1, the set $\{m_1, m_2, m_3\}$ is result set. In practice, schema matching tools may use some threshold to eliminate possible matchings with very low probability, and return only a few higher probability candidates.  If such thresholding is performed, we ignore the low probability matchings that are already pruned, and set their probability to zero. We also mention that \cite{nguyen2014pay} discussed another way to establish the probability of each matching. \vspace{-1ex}
\begin{definition}[Correspondence Set]
\label{CS} Let \textit{R} be the result set for two given schemata
\textit{S} and \textit{T}, the correspondence set $C$ is the set of
all correspondences contained by possible matchings in $R$, i.e.
$C = \bigcup_{m_i\in R}m_i$
\vspace{-1ex}
\end{definition}

\textbf{Remark:}
Note that a correspondence can appear in more than one possible
matching, so for any correspondence $c\in C$, let $\mathbb{P}(c)$ be the
probability of $c$ being in the correct matching, then\vspace{-1ex}
\begin{sequation}
\mathbb{P}(c)=\sum_{\substack{m_{i}\in R\\c\in m_{i}}}\mathbb{P}(m_{i})\label{eq:CProb}
\vspace{-1.5ex}
\end{sequation}As a simple extension, for a set of correspondences $U \subseteq
C$, let $\mathbb{P}(U)$ be the probability that all correspondences of $U$
are in the correct matching, then\vspace{-1ex}
\begin{sequation}
\mathbb{P}(U)=\sum_{\substack{m_{i}\in R\\U\subseteq m_{i}}}\mathbb{P}(m_{i})\label{eq:SetProb}
\vspace{-2ex}
\end{sequation}
For example in Table 1, $\{c_1, c_2, c_3, c_4, c_5\}$ is correspondence set. Since $c_1$ is in $m_1$ and $m_2$, $\mathbb{P}(c_1)=0.75$.
\vspace{-1ex}
\begin{definition}[Uncertainty of Schema Matching]
For two given schemata \textit{S} and \textit{T}, given result set \textit{R} and probability assignment function $\mathbb{P}()$, we measure the uncertainty of \textit{R} with Shannon entropy
$H(R) = -\sum_{m_i \in \textit{R}}{\mathbb{P}(m_i)}\log{\mathbb{P}(m_i)}$
\vspace{-1ex}
\end{definition}

\textbf{Remark:}
Shannon entropy has been widely used in information theory and many other fields since 1948 \cite{Shannon} to measure the uncertainty, disorder or unpredictability of a system. Another way to measure uncertainty is to use variance or covariance matrix. For some special cases like Bernoulli distribution, Shannon entropy has maximal value or minimal value when its variance is maximal or minimal.  A major reason for utilizing Shannon entropy as is its non-parametric nature. The probability distribution of possible matchings is very dynamic,
depending not only on the given schemata, but also on the schema
matching tools. Entropy does not require any assumptions about the
distribution of variables. Besides, entropy permits non-linear
models, which is important for categorical variables
\cite{lemay1999statistical}, such as possible matchings. \vspace{-1ex}
\nop{In
addition, considering schema matching as an uncertain query, it can
be argued  \cite{DBLP:journals/pvldb/ChengCX08} that entropy is an
accurate measurement for the quality of uncertain queries.}

\begin{definition}[Crowd's Accuracy]
Given a crowd worker W, the
crowd's accuracy (or accuracy for short), denoted by $P_{W}\in[0.5,1]$,  is the probability that W correctly answers each HIT.
\vspace{-1ex}
\end{definition}
\textbf{Remark:}
While some
papers assume that crowdsourced answers are $100\%$ accurate, we adopt a more general error
model, which requires only that the answer returned by each crowd worker is
always correct with a probability no lower than $1/2$. This is a classical crowdsourcing model
widely used by a stream of works \cite{DBLP:conf/icdt/DavidsonKMR13},  \cite{DBLP:conf/sigmod/GuoPG12},  \cite{DBLP:pvldb/LiuLOSWZ12}, \cite{DBLP:conf/sigmod/ParameswaranGPPRW12}.
Crowd workers may have different accuracies for different domains.
The accuracy for a domain can be easily estimated with a set
of sample HITs in which ground truth is known. Before we ask a CCQ, we could assume that this CCQ will be answered by a crowd with accuracy $P_W$. Since we do not know who will answer this CCQ, $P_W$ is likely to represent the hardness of CCQ as an attribute of the correspondence.\vspace{-1ex}

\begin{definition}[Entropy of Crowd]
Given a crowd worker W and
its accuracy $P_{W}$ , the entropy of W is defined by\vspace{-1ex}
\begin{sequation}
H(W)=-P_{W}\log P_{W}-\left(1-P_{W}\right)\log\left(1-P_{W}\right)\label{eq:CrowdEntropy}
\vspace{-1.5ex}
\end{sequation}Given the crowd's accuracy, $H(W)$ is a positive constant
measuring the randomness of the crowd's behaviour.
\vspace{-1ex}
\end{definition}

\begin{definition}[\small{Correspondence Correctness Question}]
A Correspondence Correctness Question (CCQ) asks whether a
correspondence is correct. The CCQ w.r.t a correspondence $c$ is
denoted as $Q_{c}$, where $c\in C$.
\vspace{-1.5ex}
\end{definition}

\textbf{Remark:}
A running example where we calculate entropy and conditional probability after we have an answer of CCQ is given in Section 3.3.1.
\vspace{-1ex}
\begin{definition}[Entropy of Answer]
\label{definition: EntropyA}
Given result set \textit{R}, probability assignment function $\mathbb{P}()$, and crowd's accuracy $P_{W}\in[0.5,1]$, the entropy of answer A corresponding to question $Q_c$ is defined by\vspace{-1ex}
\begin{sequation}\label{Eq:EntropyOfA}
H(A)=-\mathbb{P}(A=Y)\log\mathbb{P}(A=Y)-\mathbb{P}(A=N)\log\mathbb{P}(A=N)
\vspace{-1.5ex}
\end{sequation}where
\begin{sequation}\label{eq:AYN}
\begin{aligned}
\mathbb{P}(A=Y) & =\mathbb{P}(c)P_{W}+\left(1-\mathbb{P}(c)\right)\left(1-P_{W}\right)\\
\mathbb{P}(A=N) & =\left(1-\mathbb{P}(c)\right)P_{W}+\mathbb{P}(c)\left(1-P_{W}\right)
\end{aligned}
\end{sequation}
\vspace{-4ex}
\end{definition}

\begin{definition}[Problem Statement]
On two given schemata \textit{S} and \textit{T}, let the result set
\textit{R} and probability assignment function $\mathbb{P}$ be generated by
some schema matching tools. Each CCQ is assumed to be answered independently. Let \textit{B} be the budget of the
number of CCQs to be asked to the crowd. Our goal is to maximize the
reduction of $H(R)$ without exceeding the budget.
\end{definition}
\vspace{-4.2ex}

\section{Single CCQ Approach}
\label{SCCQA}
\vspace{-1ex}
In this section, we study how to choose a single CCQ well.
To be able to do this, we first address the formalization of uncertainty reduction.
Then we develop the {\em Single CCQ Approach}, a framework to address the uncertainty reduction problem using a sequence of Single CCQ. Compared with \cite{DBLP:pvldb/ZhangCJC13}, we give a new proof for uncertainty reduction under the condition of accuracy probability $P_W \in [0.5,1]$. We prove the equivalent form of uncertainty reduction and its property. Finally, we propose efficient
algorithms to implement the computations in this approach.\vspace{-3.1ex}
\begin{table}[!t]\caption{\footnotesize{Meanings of Symbols Used}} 
\label{symbols} \footnotesize
\centering 
\vspace{-3ex}
\begin{tabular}{c c} 
\hline 
Notation & Description \\ 
$c_i$, $C$, $Q_{c_i}$  & correspondence, correspondence set, CCQ w.r.t $c_i$  \\
$m_i$, $\left|m_{i}\right|$  & a possible matching, number of elements in $m_i$ \\
$\mathbb{P}(c_i)$ & probability of $c_i$ being in the correct matching  \\
$\mathbb{P}(m_i)$ & probability that $m_i$ is the correct matching  \\
$A$  or $A_{c_i}$ & the answer or the answer for correspondence $c_i$ \\
$R$, $W$, $P_W$  & result set, a crowd worker, crowd's accuracy  \\
$H(R)$, $H(W)$, $H(A)$& entropy of result set, crowd, answer \\
$\Delta H_{Q_c}$ & uncertainty reduction by publishing $Q_{c}$ \\
$D_A$ & the domain of answers of k CCQs  \\
$H(D_A)$ & joint entropy of k answers\\
$D_U$ & the domain of k correspondences\\
\vspace{-8ex}

\end{tabular} 
\end{table}

\subsection{Formulation of Uncertainty Reduction}
\vspace{-1ex}
\label{section:formalization} In order to design an effective
strategy for manipulating CCQs, it is essential to define a
measurement to estimate the importance of CCQs before they are
answered. Since the final objective is to reduce uncertainty, we use
 uncertainty reduction caused by individual CCQs
as the measurement. In the following, we provide the formulation of
the  uncertainty reduction in the context of the Single CCQ
Approach.

\nop{At this moment, we first focus on the case when only one CCQ is
to be selected. The results in this subsection are illustrative when
considering extensions to handle the case of multiple CCQs, as we
shall show in Section~\ref{subsection:Multiple-CCQ-Form}.}

Let $Q_c$ be a CCQ w.r.t an arbitrary correspondence $c$. We assume
crowdsourcing workers provide answers independently with accuracy $P_W \in [0.5,1]$. Since
$Q_c$ is a Yes/No question, we consider the answer $A$ as a random variable
following a Bernoulli distribution. Firstly, we have $D_{A}=\left\{ Y,N\right\}$ and $\mathcal{P}=\left\{\mathbb{P}\left(A=Y\right),\mathbb{P}\left(A=N\right)\right\}$, where $\mathbb{P}(A=Y)$ and $\mathbb{P}(A=N)$ can be computed by Eq~\ref{eq:AYN}. We write $\mathbb{P}(Y)$ and $\mathbb{P}(N)$ for short.
For two discrete random variables $X$ and $Y$ with p.m.f. function $p$, the conditional entropy is defined by\vspace{-1ex}
\begin{footnotesize}
\begin{align*}
H(Y|X) & =\sum_{x\in\mathcal{X}}p_{X}(x)H(Y|X=x)\\
 & =-\sum_{x\in\mathcal{X}}p_{X}(x)\sum_{y\in\mathcal{Y}}p_{Y}(y|X=x)\log p_{Y}(y|X=x)
\vspace{-1ex}
\end{align*}
\end{footnotesize}
Let $\Delta H_{Q_c}$ be the uncertainty reduction caused by $Q_c$, we have
\vspace{-1ex}
\begin{sequation}
\bigtriangleup H_{Q_{c}} =H(R)-H\left(R|A\right)
\label{eq:Ed}
\vspace{-1.2ex}
\end{sequation}
where\vspace{-1ex}
\begin{footnotesize}
\begin{align*}
-H\left(R|A\right) & =\mathbb{P}(Y)\sum_{m_{i}\in R}\left[\mathbb{P}\left(m_{i}|Y\right)\log\mathbb{P}\left(m_{i}|Y\right)\right]\\
 & \quad+\mathbb{P}(N)\sum_{m_{i}\in R}\left[\mathbb{P}\left(m_{i}|N\right)\log\mathbb{P}\left(m_{i}|N\right)\right]
\end{align*}
\end{footnotesize}
\vspace{-1ex}
\begin{sequation}
\begin{aligned}\mathbb{P}(m_{i}|Y) & =\frac{\mathbb{P}(m_{i})\mathbb{P}(Y|m_{i})}{P_{W}\mathbb{P}(c)+\left(1-P_{W}\right)(1-\mathbb{P}(c))}\\
\mathbb{P}(m_{i}|N) & =\frac{\mathbb{P}(m_{i})\mathbb{P}(N|m_{i})}{P_{W}(1-\mathbb{P}(c))+\left(1-P_{W}\right)\mathbb{P}(c)}
\end{aligned}
\label{eq:adjust}
\vspace{-1ex}
\end{sequation}The uncertainty reduction w.r.t a given $Q_c$ can be computed by Eq~\ref{eq:Ed}
provided that we know the values for parameters:
$\mathbb{P}(c)$, $\mathbb{P}(m_i)$, $\mathbb{P}(Y|m_i)$ and $\mathbb{P}(N|m_i)$. $\mathbb{P}(c)$ can be computed by Eq~\ref{eq:CProb}.
$\mathbb{P}(Y|m_i)$ and $\mathbb{P}(N|m_i)$ depend on
if $c$ is a correspondence included in $m_i$.\vspace{-1ex}
\begin{sequation}
\begin{aligned}
\label{accuracyConditionedonPR}
\mathbb{P}(Y|m_i)=\begin{cases}P_W&c\in m_i\\ 1-P_W&c\notin
m_i\end{cases};
\\
\mathbb{P}(N|m_i)=\begin{cases}1-P_W&c\in m_i\\ P_W&c\notin
m_i\end{cases}
\end{aligned}
\vspace{-2ex}
\end{sequation}

\textbf{Remark: The harmlessness of random answer} If a worker W randomly answers a CCQ $Q_c$, i.e. $P_W=0.5$ and $\mathbb{P}(Y)=0.5$, it does not affect the uncertainty of schema matching. In other word, by Eq~\ref{eq:adjust}, we have $\mathbb{P}(m_i|Y)=\mathbb{P}(m_i)$.

Eq~\ref{eq:adjust} is applied recursively as multiple answers are received, to take all of them into account.  If multiple answers all agree, each iteration will make the truth of $c$ more certain, whereas disagreeing answers will pull the probability closer to the middle.  In other words, disagreements between workers are gracefully handled.
It is easy to perform the algebraic manipulations to show
that, for any two answers $A_1$ and $A_2$, we have $\mathbb{P}(m_i|A_1,A_2)=\mathbb{P}(m_i|A_2,A_1)$.
This equation indicates that the result of adjustment is
\textbf{independent} of the sequence of the answers.  In other words, when we have a deterministic set of questions (CCQs), it does not matter in what sequence the answers are used for adjustment. In contrast, what matters is to determine the set of CCQs to be asked, which is the core challenge addressed in this paper.\vspace{-3ex}

\subsection{Framework of Single CCQ}\vspace{-1ex}
Having developed a technique to find the best Single CCQ, we can place this at the heart of an approach to solve the schema matching problem, as shown in Framework~\ref{SingleCCQ}.
The idea is to greedily select the single CCQ in each iteration that will result in the greatest reduction of uncertainty.
We publish this CCQ; when it is answered and returned with accuracy rate
(line 4), we adjust $\mathbb{P}(m_i)$ by $\mathbb{P}(m_i|A)$ (line 5), and then generate a new CCQ (line 7\&8).

\begin{salgorithm}
\floatname{algorithm}{\small{Framework}} \caption{\small{Single CCQ}} \label{SingleCCQ}
\begin{algorithmic}[1]
 \State $CONS\gets 1$ // consumption of the budget
 \State Find and publish $Q_{c_i}$ that maximize $\Delta H_{Q_c} $ //
   \While {there exists a CCQ in the crowd, we constantly monitor the CCQ}
\For{answer $A_i$ of $Q_{c_i}$, accuracy rate $P_{W_i}$}
   \State $\forall m_i\in R$ Adjust the $\mathbb{P}(m_i)$ by $\mathbb{P}(m_i|A)$  //
 \If {$CONS<B$}
 \State Finding $Q_{c_j}$ maximizing $\Delta H_{Q_c}$ //
\State publish $Q_{c_j}$, \State $CONS = CONS+1$ \Else  \State
terminate (no more budget)
 \EndIf
\EndFor
\EndWhile\vspace{-1ex}
\end{algorithmic}
\end{salgorithm}

In the framework of Single CCQ, one can see that an important task is to
find the CCQ with the highest uncertainty reduction
as soon as the probability distribution of $R$ is adjusted (line
7). We can formally pose this as a query as follows, and focus on efficiently processing such a query in the rest of this section.\vspace{-1ex}
\begin{definition}[Single CCQ Selection (SCCQS)]
Given result set \textit{R}, probability assignment function $\mathbb{P}()$, crowd's accuracy $P_W \in[0.5,1]$,
the Single CCQ Selection Query retrieves a CCQ maximizing the
 uncertainty reduction  $ \Delta H_{Q_c}$ in Eq~\ref{eq:Ed}.
\end{definition}
\vspace{-4ex}
\subsection{Query Processing of SCCQS}
\vspace{-1ex}
Based on the formulation in Section~\ref{section:formalization}, we
are able to compute the uncertainty reduction of each CCQ.
So a naive approach of selection is to traverse all the CCQs. Such
traversal results in an algorithm with time complexity $\mathcal{O}(|R|^2|C|)$, i.e.
the square of the number of possible matchings multiplied by the
number of correspondences.  This can be a very large number for complex schema.

In this subsection, we first provide a lossless simplification, by
proving the uncertainty reduction is mathematically
equivalent to the entropy of the answer of a CCQ minus the entropy of the crowd. Then, in order to
further improve the efficiency, we propose an index structure based
on binary coding, together with a pruning technique.\vspace{-2ex}
\subsubsection{Simplification of Single CCQ Selection}\vspace{-1ex}
\label{proofSCCQ} When we need to determine a strategy of selecting CCQs, a
very intuitive idea is to prioritize the ones that we are more
uncertain. In case of Single CCQ, this idea suggests that we select the
CCQ with probability closest to $0.5$. This idea is trivially correct when all the correspondences are independent. However, with the model of possible matchings, there are correlations among the correspondences. Then, a non-trivial question is: should we still pick the CCQ with probability closest to $0.5$ with the presence of correlation?

Interestingly, we discover
that the answer is positive. By
Theorem~\ref{Simp-Single}, we prove that the  uncertainty reduction
$\Delta H_{Q_c}$ of a correspondence $c$ is equivalent to the entropy
of the answer $H(A)$ minus the entropy of the crowd $H(W)$. In other words,for a fixed $P_W$, $\Delta H_{Q_c}$ is only determined by $\mathbb{P}(c)$. As a result, searching for the CCQ that maximize $\Delta
H_{Q_c}$ has the complexity decreased to $\mathcal{O}(|R||C|)$, by
computing $\mathbb{P}(c)$ for each $c\in C$. In addition,
Theorem~\ref{Simp-Single-c} states that we only need to find the
correspondence that has probability closest to $0.5$, based on the
fact that $\Delta H_{Q_c}$ is a symmetric function of $\mathbb{P}(c)$, with
symmetry axis $\mathbb{P}(c)=0.5$ and achieves maximum when $\mathbb{P}(c)=0.5$.\vspace{-1ex}

\begin{theorem} \label{Simp-Single}
 Given result set \textit{R}, probability assignment function $\mathbb{P}()$, crowd's accuracy $P_W \in[0.5,1]$, for correspondence $c\in C$, we have\vspace{-2ex}
 \begin{footnotesize}
 \begin{align*}
 \Delta H_{Q_c} = H(A)-H(W)
 \vspace{-2ex}
\end{align*}
\end{footnotesize}where we recall $\Delta H_{Q_c}$ in Eq~\ref{eq:Ed} and  $H(A)$, $H(W)$ are defined in Eq~\ref{Eq:EntropyOfA}, Eq~\ref{eq:CrowdEntropy}.
\end{theorem}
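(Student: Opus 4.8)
The plan is to recognize that $\Delta H_{Q_c}=H(R)-H(R\mid A)$ is precisely the mutual information between the random variable $R$ (which possible matching is the correct one) and the answer $A$, and then to exploit the symmetry of mutual information rather than grinding through the Bayes-rule expansions in Eq.~\ref{eq:adjust}. Concretely, I would first spell out the joint distribution of the pair $(R,A)$: its marginal on $R$ is $\mathbb{P}(m_i)$, its conditional law $\mathbb{P}(A\mid R=m_i)$ is given by Eq.~\ref{accuracyConditionedonPR}, and its induced marginal on $A$ is $\{\mathbb{P}(Y),\mathbb{P}(N)\}$ as computed in Eq.~\ref{eq:AYN}. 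Decomposing the joint entropy $H(R,A)$ by the chain rule in the two possible orders gives $H(R,A)=H(R)+H(A\mid R)=H(A)+H(R\mid A)$, hence the identity $H(R)-H(R\mid A)=H(A)-H(A\mid R)$. So the theorem reduces to showing $H(A\mid R)=H(W)$.

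The key step, and essentially the only place any content enters, is evaluating $H(A\mid R)=\sum_{m_i\in R}\mathbb{P}(m_i)\,H(A\mid R=m_i)$. By Eq.~\ref{accuracyConditionedonPR}, conditioned on the event that $m_i$ is the correct matching, $A$ is a Bernoulli variable with parameter $P_W$ when $c\in m_i$ and parameter $1-P_W$ when $c\notin m_i$; in either case its entropy is $-P_W\log P_W-(1-P_W)\log(1-P_W)=H(W)$ by Eq.~\ref{eq:CrowdEntropy}, since a $\mathrm{Bernoulli}(p)$ and a $\mathrm{Bernoulli}(1-p)$ have the same entropy. Therefore $H(A\mid R=m_i)=H(W)$ for every $m_i$, and because $\sum_{m_i\in R}\mathbb{P}(m_i)=1$ we obtain $H(A\mid R)=H(W)$. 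Combining this with the identity from the previous paragraph yields $\Delta H_{Q_c}=H(R)-H(R\mid A)=H(A)-H(W)$, which is exactly the claim.

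The main obstacle is largely presentational: one must check that the conditionals $\mathbb{P}(A\mid R=m_i)$ used above are consistent with the posterior formulas $\mathbb{P}(m_i\mid A)$ of Eq.~\ref{eq:adjust} (they are, since those posteriors are just Bayes' rule applied to $\mathbb{P}(A\mid R=m_i)$ together with the prior $\mathbb{P}(m_i)$, with the denominators in Eq.~\ref{eq:adjust} equal to $\mathbb{P}(Y)$ and $\mathbb{P}(N)$), and that $H(W)$ is insensitive to which branch of Eq.~\ref{accuracyConditionedonPR} applies. If a self-contained argument avoiding the entropy chain rule is preferred, the fallback is to substitute Eq.~\ref{eq:adjust} directly into $-H(R\mid A)$, split each inner sum over $m_i$ according to whether $c\in m_i$, pull the $\log$ of the normalizing denominators out of the sums, and collect the remaining terms; this reproduces $H(A)-H(W)$ but is considerably more tedious, so I would only resort to it if invoking the information-theoretic identity is deemed objectionable.
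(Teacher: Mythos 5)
Your proof is correct, but it takes a genuinely different route from the paper's. The paper proves the identity by brute force: it substitutes the posterior formulas of Eq.~\ref{eq:adjust} into Eq.~\ref{eq:Ed}, splits the resulting expression into six sums $J_1,\dots,J_6$, and shows term by term that $J_1+J_4=-H(R)$, $J_2+J_5=-H(W)$, and $J_3+J_6=H(A)$ -- precisely the ``considerably more tedious'' fallback you describe in your last paragraph. Your main argument instead identifies $\Delta H_{Q_c}=H(R)-H(R\mid A)$ as the mutual information $I(R;A)$, invokes the symmetry $H(R)-H(R\mid A)=H(A)-H(A\mid R)$ via the two chain-rule decompositions of $H(R,A)$, and reduces everything to the one-line observation that $H(A\mid R=m_i)=H(W)$ for every $m_i$ because a $\mathrm{Bernoulli}(P_W)$ and a $\mathrm{Bernoulli}(1-P_W)$ have the same entropy. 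This is cleaner and more conceptual: it makes transparent \emph{why} the crowd's entropy appears as an additive penalty (it is exactly the conditional entropy of the answer given the truth), and it generalizes immediately -- indeed the paper's own Theorem~\ref{smp} for multiple CCQs is essentially this same argument run with joint entropies. What the paper's direct computation buys in exchange is self-containedness (no appeal to the chain rule or to the consistency of the joint law of $(R,A)$) and explicit visibility of how the normalizing denominators of Eq.~\ref{eq:adjust} cancel. You correctly flag the one point that needs checking in your route, namely that the posteriors of Eq.~\ref{eq:adjust} are Bayes' rule applied to the conditional of Eq.~\ref{accuracyConditionedonPR} with denominators $\mathbb{P}(Y)$ and $\mathbb{P}(N)$, so there is no gap.
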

\vspace{-2ex}
\begin{proof}
Using Eq~\ref{eq:adjust} into Eq~\ref{eq:Ed}, we have
\vspace{-1ex}
\begin{footnotesize}
\begin{align*}
\allowdisplaybreaks
 & \bigtriangleup H_{Q_{c}}\\
 & =H(R)+\sum_{m_{i}\in R}\left[\mathbb{P}(m_{i})\mathbb{P}(Y|m_{i})\log\mathbb{P}(m_{i})\right]\\
 & \quad+\sum_{m_{i}\in R}\left[\mathbb{P}(m_{i})\mathbb{P}(Y|m_{i})\log\mathbb{P}\left(Y|m_{i}\right)-\mathbb{P}(m_{i})\mathbb{P}(Y|m_{i})\log\mathbb{P}(Y)\right]\\
 & \quad+\sum_{m_{i}\in R}\left[\mathbb{P}(m_{i})\mathbb{P}(N|m_{i})\log\mathbb{P}(m_{i})+\mathbb{P}(m_{i})\mathbb{P}(N|m_{i})\log\mathbb{P}\left(N|m_{i}\right)\right]\\
 & \quad-\sum_{m_{i}\in R}\left[\mathbb{P}(m_{i})\mathbb{P}(N|m_{i})\log\mathbb{P}(N)\right]\\
 & \coloneqq H(R)+J_{1}+J_{2}+J_{3}+J_{4}+J_{5}+J_{6}
\end{align*}\vspace{-1.5ex}
\end{footnotesize} By Eq~\ref{accuracyConditionedonPR}, we obtain that
\begin{footnotesize}
\[
J_{1}+J_{4}=\sum_{m_{i}\in R}\mathbb{P}(m_{i})\log\mathbb{P}(m_{i})=-H(R)
\]
\end{footnotesize}
and
\vspace{-1.5ex}
\begin{footnotesize}
\begin{align}
 & J_{2}+J_{5}\nonumber \\
 & =\sum_{\substack{m_{i}\in R\\
c\in m_{i}
}
}\mathbb{P}(m_{i})P_{W}\log P_{W}+\sum_{\substack{m_{i}\in R\\
c\notin m_{i}
}
}\mathbb{P}(m_{i})\left(1-P_{W}\right)\log\left(1-P_{W}\right)\nonumber \\
 & \quad+\sum_{\substack{m_{i}\in R\\
c\notin m_{i}
}
}\mathbb{P}(m_{i})P_{W}\log P_{W}+\sum_{\substack{m_{i}\in R\\
c\in m_{i}
}
}\mathbb{P}(m_{i})\left(1-P_{W}\right)\log\left(1-P_{W}\right)\nonumber \\
 & =-H(W)\vspace{-4.5ex}
 \label{eq:j2j5}
\end{align}
\end{footnotesize}Recall Eq~\ref{eq:CProb}. It follows that\vspace{-1ex}
\begin{footnotesize}
\begin{align*} & J_{3}+J_{6}\\
 & =-\sum_{\substack{m_{i}\in R\\
c\in m_{i}
}
}\mathbb{P}(m_{i})P_{W}\log\mathbb{P}(Y)-\sum_{\substack{m_{i}\in R\\
c\notin m_{i}
}
}\mathbb{P}(m_{i})\left(1-P_{W}\right)\log\mathbb{P}(Y)\\
 & \quad-\sum_{\substack{m_{i}\in R\\
c\notin m_{i}
}
}\mathbb{P}(m_{i})P_{W}\log\mathbb{P}(N)-\sum_{\substack{m_{i}\in R\\
c\in m_{i}
}
}\mathbb{P}(m_{i})\left(1-P_{W}\right)\log\mathbb{P}(N)\\
 & =-\left[\mathbb{P}(c)P_{W}+\left(1-\mathbb{P}(c)\right)\left(1-P_{W}\right)\right]\log\mathbb{P}(Y)\\
 & \quad-\left[\left(1-\mathbb{P}(c)\right)P_{W}+\mathbb{P}(c)\left(1-P_{W}\right)\right]\log\mathbb{P}(N)\vspace{-1ex}\\
 & =H(A)\vspace{-1ex}
\end{align*}\vspace{-1.5ex}
\end{footnotesize}This completes the proof.
\end{proof}

\begin{theorem}
\label{Simp-Single-c} The uncertainty reduction of single CQQ is always non-negative for $P_W \in [0.5,1]$. For any two correspondence $c,c' \in C$,  if
$|0.5-\mathbb{P}(c)|\leq|0.5-\mathbb{P}(c')|$ then
 $\Delta H_{Q_c} \geq \Delta H_{Q_c'} \geq 0$. In addition, $\Delta H_{Q_c}=1-P_W$ if $\mathbb{P}(c)=0.5$
\end{theorem}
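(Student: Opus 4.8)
The plan is to reduce everything to Theorem~\ref{Simp-Single}, which already gives $\Delta H_{Q_c} = H(A) - H(W)$. Since $H(W)$ depends only on the fixed accuracy $P_W$ and not on the correspondence, two things remain: (i) comparing $\Delta H_{Q_c}$ across correspondences is the same as comparing the answer entropies $H(A)$, and (ii) non-negativity is the same as the inequality $H(A) \ge H(W)$ for every $c$. So the whole argument is a statement about the binary entropy function.

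First I would set up notation for the binary entropy function $h(x) := -x\log x - (1-x)\log(1-x)$ on $[0,1]$ (with the logarithm base chosen so that $h(\tfrac12)=1$), so that $H(A) = h(\mathbb{P}(A=Y))$ and $H(W) = h(P_W)$. I would record the two elementary facts I need: $h$ is symmetric about $x=\tfrac12$, i.e. $h(x)=h(1-x)$; and $h$ is strictly increasing on $[0,\tfrac12]$ and strictly decreasing on $[\tfrac12,1]$. Consequently $h(x)\ge h(x')$ whenever $|x-\tfrac12|\le|x'-\tfrac12|$, with maximum $h(\tfrac12)=1$.

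The one genuine computation is to simplify $\mathbb{P}(A=Y)$ from Eq~\ref{eq:AYN}: writing $q=\mathbb{P}(c)$ and $p=P_W$,
\[
\mathbb{P}(A=Y)-\tfrac12 = pq+(1-p)(1-q)-\tfrac12 = (2p-1)\bigl(q-\tfrac12\bigr),
\]
hence $\bigl|\mathbb{P}(A=Y)-\tfrac12\bigr| = (2P_W-1)\,\bigl|\mathbb{P}(c)-\tfrac12\bigr|$, using $P_W\ge\tfrac12$. This identity is the crux: the distance of the answer probability from $\tfrac12$ is the fixed nonnegative multiple $2P_W-1$ of the distance of $\mathbb{P}(c)$ from $\tfrac12$. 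From it all three claims drop out. Monotonicity: if $|0.5-\mathbb{P}(c)|\le|0.5-\mathbb{P}(c')|$ then $|\mathbb{P}(A=Y)-\tfrac12|\le|\mathbb{P}(A'=Y)-\tfrac12|$, so $H(A)\ge H(A')$ by the properties of $h$, whence $\Delta H_{Q_c}\ge\Delta H_{Q_{c'}}$. Non-negativity: since $|\mathbb{P}(c)-\tfrac12|\le\tfrac12$, the identity gives $|\mathbb{P}(A=Y)-\tfrac12|\le(2P_W-1)/2=|P_W-\tfrac12|$, so $H(A)=h(\mathbb{P}(A=Y))\ge h(P_W)=H(W)$, i.e. $\Delta H_{Q_c}\ge0$. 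Value at $\mathbb{P}(c)=\tfrac12$: then $\mathbb{P}(A=Y)=\tfrac12$, so $H(A)=1$ and $\Delta H_{Q_c}=1-H(W)$.

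I do not expect a real obstacle here — once Theorem~\ref{Simp-Single} is available the proof is elementary. The only points needing care are stating and justifying the symmetry and unimodality of $h$ cleanly, being consistent about the logarithm base so that $h(\tfrac12)=1$, and the small sign bookkeeping that makes $2P_W-1\ge0$ (not $\le 0$) the relevant fact. I would also flag that the closed form stated at $\mathbb{P}(c)=\tfrac12$ is most naturally written $\Delta H_{Q_c}=1-H(W)$.
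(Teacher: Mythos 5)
Your proposal is correct, and it reaches the same conclusions as the paper by a noticeably cleaner route. The paper works with $\Delta H_{Q_c}$ as a function of $\mathbb{P}(Y)$ and $P_W$ and argues by calculus: it computes $\partial \Delta H_{Q_c}/\partial\mathbb{P}(Y) = -\log\frac{\mathbb{P}(Y)}{1-\mathbb{P}(Y)}$ to get symmetry and unimodality about $\mathbb{P}(Y)=0.5$, observes that $\mathbb{P}(Y)$ is increasing in $\mathbb{P}(c)$ and equals $0.5$ at $\mathbb{P}(c)=0.5$, and then, for non-negativity, computes $\partial\Delta H_{Q_c}/\partial P_W$ and evaluates at the endpoints $\mathbb{P}(c)\in\{0,1\}$ to show the minimum value is $0$. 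You instead isolate the affine identity $\mathbb{P}(A=Y)-\tfrac12=(2P_W-1)\bigl(\mathbb{P}(c)-\tfrac12\bigr)$ and combine it with the standard symmetry and unimodality of the binary entropy function $h$. This buys two things. First, the monotonicity claim in the theorem is stated in terms of $|0.5-\mathbb{P}(c)|$, and your identity makes the passage from $|0.5-\mathbb{P}(c)|$ to $|0.5-\mathbb{P}(Y)|$ completely explicit, whereas the paper leaves that step implicit behind the remark that $\mathbb{P}(Y)$ is increasing in $\mathbb{P}(c)$. Second, your non-negativity argument ($|\mathbb{P}(c)-\tfrac12|\le\tfrac12$ forces $|\mathbb{P}(A=Y)-\tfrac12|\le|P_W-\tfrac12|$, hence $H(A)\ge H(W)$) is a one-line comparison that avoids the paper's derivative-in-$P_W$ computation and its endpoint analysis. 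Your flag about the closed form is also well taken: the paper's own proof derives the maximum value $\Delta H_{Q_c}=1-H(W)$ at $\mathbb{P}(c)=0.5$, so the ``$1-P_W$'' in the theorem statement is evidently a typo for $1-H(W)$, exactly as you suspected.
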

\vspace{-2.5ex}
\begin{proof}
By Theorem~\ref{Simp-Single}, $\Delta H_{Q_c}$ is a function of $\mathbb{P}(Y)$ and $P_W$. $\mathbb{P}(Y)$ is a function of $\mathbb{P}(c)$ and $P_W$. We first consider\vspace{-1ex}
\begin{footnotesize}
\[
\frac{\partial\bigtriangleup H_{Q_c}}{\partial\mathbb{P}(Y)}=-\log\frac{\mathbb{P}(Y)}{1-\mathbb{P}(Y)}\vspace{-1ex}
\]
\end{footnotesize}We could obtain that\vspace{-1ex}
\begin{footnotesize}
\[
-\log\frac{\mathbb{P}(Y)}{1-\mathbb{P}(Y)}=0\Leftrightarrow\mathbb{P}(Y)=0.5\vspace{-1ex}
\]
\end{footnotesize}It is easy to check that $\Delta H_{Q_c}$ is a symmetric function of $\mathbb{P}(Y)$, with symmetry axis
$\mathbb{P}(Y)=0.5$. Besides, the function achieves maximum $\Delta H_{Q_c}=1-H(W)$ when $\mathbb{P}(A=Y)=0.5$, and is monotonic on $[0,0.5]$
(increasing) and $[0.5,1]$ (decreasing). We also know that $\mathbb{P}(Y)=\mathbb{P}(c)P_{W}+\left(1-\mathbb{P}(c)\right)\left(1-P_{W}\right)$. So $\mathbb{P}(Y)$ is increasing w.r.t. $\mathbb{P}(c)$ and achieves the value $0.5$ when $\mathbb{P}(c)=0.5$ or $P_{W}=0.5$. Thus $\Delta H_{Q_c}$ achieves maximum $1-H(W)$ when $\mathbb{P}(c)=0.5$. Secondly we consider\vspace{-1ex}
\begin{footnotesize}
\[
\frac{\partial\bigtriangleup H_{Q_{c}}}{\partial P_{W}} =\left(2\mathbb{P}(c)-1\right)\log\frac{\mathbb{P}(c)-\left(2\mathbb{P}(c)-1\right)P_{W}}{\left(2\mathbb{P}(c)-1\right)P_{W}-\mathbb{P}(c)+1}+\log\frac{P_{W}}{1-P_{W}}
\vspace{-1ex}
\]
\end{footnotesize}Since $\Delta H_{Q_c}$ is a symmetric function of $\mathbb{P}(A=Y)$, with symmetry axis
$\mathbb{P}(A=Y)=0.5$ and $\mathbb{P}(A=Y)$ is increasing w.r.t. $\mathbb{P}(c)$, we choose  $\mathbb{P}(c)=0$  and $\mathbb{P}(c)=1$ in order to obtain minimum of $\Delta H_{Q_c}$. When $\mathbb{P}(c)=0$  or $\mathbb{P}(c)=1$, we have
$
\frac{\partial\bigtriangleup H_{Q_c}}{\partial P_{W}}=0
$
and $\Delta H_{Q_c}=0$. Thus we prove that $\Delta H_{Q_c}$ is non-negative.\vspace{-1.5ex}
\end{proof}

\textbf{Running Example (Selecting First Two CCQs):} Now we illustrate the process of selecting the first two CCQs in Framework~\ref{SingleCCQ} with the example of Table~\ref{table:schemB}. In line 2, the first correspondence to be asked is $c_2$, since its probability is closest to 0.5 among $\{c_1,c_2,c_3,c_4,c_5\}$. Explicitly, $\Delta H_{Q_{c_2}}= -0.7*log(0.7)-0.3*log(0.3)=0.88$. Suppose an answer ``$a=yes$'' is received from a crowd worker, whose personal error rate is $1-P_W=0.2$ (line 4). Then we conduct the adjustment according to Eq~\ref{eq:adjust}, and have $\mathbb{P}(m_1|a)=0.58$, $\mathbb{P}(m_2|a)=0.10$ and $\mathbb{P}(m_3|a)=0.32$. This adjustment is referring to the first-time execution of line 5. Then, in line 7, the next CCQ is to be selected. Note that, since the probabilities of possible matchings are adjusted, probabilities of correspondences should be recomputed by Eq~\ref{eq:CProb}: $\mathbb{P}(c_1)=0.68$, $\mathbb{P}(c_2)=0.9$, $\mathbb{P}(c_3)=1$, $\mathbb{P}(c_4)=0.68$, $\mathbb{P}(c_5)=0.32$. Therefore, in line 7, we select the CCQ based on the updated probabilities of correspondences, i.e. $c_1$ would be selected. (There is a tie among $c_1$,$c_4$ and $c_5$, and we break the tie sequentially.)\vspace{-1ex}

\subsubsection{Binary Coding and Pruning Techniques}
\vspace{-1ex}
One can see that a basic computation of our algorithm is to check
whether a given correspondence $c$ is in a given possible matching
$m_i$. Since the correspondences included in each possible matching do not change
with the value of overall uncertainty, we propose to index $R$
with a binary matrix $M_{R}$, where element $e_{ij}=1(0)$
representing $c_j\in m_i (c_j\notin m_i)$. Equipped with this index, we apply a pruning technique derived from
Theorem~\ref{Simp-Single-c}.

Now we illustrate the procedure of generating the correspondence with probability
closest to 0.5. For each $c_j$, we traverse $m_i$ and accumulate
$\mathbb{P}(m_i)$ if $e_{ij}=1$. Let $c_{best\_so\_far}$ be the best
correspondence so far, with probability $\mathbb{P}(c_{best\_so\_far})$.
Then, let  $c_j$ be the current correspondence, and $P_{acc}$ be its
accumulated probability after reading some $\mathbb{P}(m_i)$, then $c_j$ can
be safely pruned if we have $ P_{acc} - 0.5 \geq
|\mathbb{P}(c_{best\_so\_far})-0.5|.$ \vspace{-2.6ex}

\section{Multiple CCQ Approach}
\label{MCCQA}
\vspace{-1ex}
A drawback of single CCQ is that only one correspondence is resolved at a time.
Each resolution, even if quick, requires human time scales, and comes with some overhead to publish the corresponding HIT and tear it down.  Gaining confidence in a single schema matching may require addressing many CCQs.  The time required to do this in sequence may be prohibitive.

An alternative we consider in this section is to issue multiple ($k$) CCQs simultaneously.
Different workers can then pick up these tasks and solve them in parallel, cutting down wall-clock time.  However, we pay for this by having some questions answered that are not at the top of the list -- we are issuing $k$ good questions rather than only the very best one.

Note that there are three possible states for a published CCQ: (1)
\textbf{waiting} - no one has accepted the question yet; (2)
\textbf{accepted} - someone in the crowd has accepted the question
and is working on it; (3) \textbf{answered} - the answer of the CCQ
is available. What's more important, one can withdraw published CCQs
that are still at state waiting  (e.g. forceExpireHIT in Mechanical
Turk APIs)\cite{DBLP:conf/sigmod/FranklinKKRX11}. In other words,
publishing a CCQ does not necessarily consume the budget. It is
possible that a CCQ is published, and then withdrawn before anyone
in the crowd answers it. In such case, the budget is not consumed.
Because of the dependence between correspondences, we can withdraw or
replace some of the published CCQs that are at ``waiting'' state.
Equipped with this power, we propose the Multiple CCQ approach to
dynamically keep $k$ best CCQs published at all times.

In the rest of this section, we provide the formulation and
framework of Multiple CCQs, by extending our results of Single CCQ.
Compared with our conference paper \cite{DBLP:pvldb/ZhangCJC13}, we give new proofs for uncertainty reduction under more general condition that crowd workers have accuracy probabilities $P_{W_i} \in [0.5,1]$. These probabilities can show hardness of CQQs or how professional workers are. Accuracy probabilities are assumed before we ask CCQs and are returned with answers after we publish CCQs. They can be totally different for different correspondences and different crowd workers.  We prove \textbf{the uncertainty reduction equals to joint entropy of answers minus sum of entropies of crowds}. We also show upper and lower bounds for uncertainty reduction. Results in \cite{DBLP:pvldb/ZhangCJC13} can be viewed as a special case when workers are always correct. Finally, we prove the NP-hardness of the
multiple CCQs selection problem, and propose an efficient
approximation algorithm with bounded error.\vspace{-3.2ex}

\subsection{Formulating Uncertainty Reduction of Multiple CCQ Approach}
\label{subsection:Multiple-CCQ-Form}

For a set of CCQs of size k - $S_{Q} = \{Q_{c_1},Q_{c_2},...,Q_{c_k}\}$, $A_{c_{1}}$, $A_{c_{2}}$, ..., $A_{c_{k}}$ denote answers of k CCQs given by k workers $W_1$, $W_2$, ..., $W_k$ with accuracy $P_{W_1}$, $P_{W_2}$, ..., $P_{W_k}$. We want to derive the uncertainty reduction caused by the aggregation of the answers of these k CCQs. Let $D_A$ and $\mathcal{P}_A$ be the domain and probability distribution of answers respectively.  Each element of $D_A$ is a possible set of answers for k CCQs (a sequence of Y and N) with a corresponding probability in $\mathcal{P}_A$. Then first we have\vspace{-1ex}
\begin{footnotesize}
\begin{align*}
D_{A} & =\left\{ a_{i}\left|a_{i}=\left\{ A_{c_{1}}^{(i)},A_{c_{2}}^{(i)},...,A_{c_{k}}^{(i)}\right\} ,\;A_{c_{j}}^{(i)}=Y\;\mathrm{or}\;N\right.\right\} \\
\mathcal{P}_A & =\left\{\mathbb{P}\left(a_{1}\right),\mathbb{P}\left(a_{2}\right),...,\mathbb{P}\left(a_{2^{k}}\right)\right\}
\vspace{-1.5ex}
\end{align*}
\end{footnotesize}where $\left|D_{A}\right|=2^{k}$. As we know, each correspondence $c_t$ has a probability to show its ground truth, i.e. with $\mathbb{P}(c_t)$ to be true before crowds answer CCQs. We view $U=\{c_{t};t=1,...,k\}$ as a set of random variables which follow Bernoulli distribution and take value True/False. Note that they are not independent and their joint p.m.f. can be calculated by Eq~\ref{eq:SetProb}.   Let $D_U$ and $\mathcal{P}_U$ be the domain and probability distribution of $\{c_{t};t=1,...,k\}$ respectively.  Each element of $D_U$ is a sequence of T and F with a corresponding probability in $\mathcal{P}_U$. Thus we have\vspace{-1ex}
\begin{sequation}
\begin{aligned}D_{U} & =\left\{ u_{i}\left|u_{i}=\left\{ c_{1}^{(i)},c_{2}^{(i)},...,c_{k}^{(i)}\right\} ,\;c_{t}^{(i)}=T\; \mathrm{or}\; F\right.\right\} \\
\mathcal{P}_U & =\left(\mathbb{P}\left(u_{1}\right),\mathbb{P}\left(u_{2}\right),...,\mathbb{P}\left(u_{2^{k}}\right)\right)
\vspace{-1ex}
\end{aligned}
\vspace{-1ex}
\label{eq:DUandP}
\end{sequation}
\vspace{-1ex}
where $\left|D_{U}\right|=2^{k}$. By Eq~\ref{eq:SetProb}, we have
\begin{footnotesize}
\begin{equation}
\mathbb{P}\left(u_{i}\right)=\sum_{\substack{m_{j}\in R\\
t=1,...,k\\
\forall c_{t}^{(i)}=T,\;c_{t}\in m_{j}\\
\forall c_{t}^{(i)}=F,\;c_{t}\notin m_{j}
}
}\mathbb{P}(m_{j})\label{eq:Pui}
\vspace{-1.5ex}
\end{equation}
\end{footnotesize}
\textbf{Remark: Complexity} We remark that in computation of all $\mathbb{P}(u_{i})$, $i=1,...,2^k$, each $\mathbb{P}(m_{j})$, $j=1,...,|R|$ will be used once and only once. Therefore, the number of elements with positive probability in $D_U$ is less than or equal to $|R|$ and time complexity of computing all $\mathbb{P}(u_{i})$ is bounded by $\mathcal{O}(k|R|)$.

Similar to Eq~\ref{eq:Ed}, we are able to to compute the  uncertainty reduction caused by the $S_{Q}$, denoted by $\Delta H_{S_{Q}}$. We have\vspace{-1ex}
\begin{footnotesize}
\begin{align}
\Delta H_{S_{Q}} & =H(R)-H\left(R\left|A_{c_{1}},...,A_{c_{k}}\right)\right)\label{eq:multUD_exp}\\
 & =H(R)+\sum_{a_{i}\in D_{A}}\mathbb{P}(a_{i})\sum_{m_{j}\in R}\left[\mathbb{P}(m_{j}|a_{i})\log\mathbb{P}(m_{j}|a_{i})\right]\nonumber \\
 & =H(R)+\sum_{\substack{m_{j}\in R\\
a_{i}\in D_{A}
}
}\mathbb{P}(m_{j})\mathbb{P}(a_{i}|m_{j})\log\frac{\mathbb{P}(m_{j})\mathbb{P}(a_{i}|m_{j})}{\mathbb{P}(a_{i})}\nonumber
\end{align}
\vspace{-1.5ex}
\end{footnotesize}

\textbf{Computation of $\mathbb{P}(a_i)$}: For one CQQ $c_i$, $A_{c_i}$ is the answer given by a worker with accuracy $P_{W_i}$. When $A_{c_i}$ is Yes, $c_i$ may be True and worker is correct, or $c_i$ is False and worker is incorrect. It is easy to see that\vspace{-1ex}
\begin{footnotesize}
\begin{align*}
\mathbb{P}(A_{c_i}=Y) & =\mathbb{P}(c_i)P_{W_i}+\left(1-\mathbb{P}(c_i)\right)\left(1-P_{W_i}\right)\\
\mathbb{P}(A_{c_i}=N) & =\left(1-\mathbb{P}(c_i)\right)P_{W_i}+\mathbb{P}(c_i)\left(1-P_{W_i}\right)\vspace{-2ex}
\end{align*}
\end{footnotesize}For $k$ CQQs, the answers in $a_i$ are denoted by $A_{c_{1}}^{(i)}$, $A_{c_{2}}^{(i)}$, ..., $A_{c_{k}}^{(i)}$. Similarly, $\mathbb{P}(a_i)$ can be computed by Eq~\ref{eq:Pui}.\vspace{-1.5ex}
\begin{sequation}
\mathbb{P}(a_{i})=\sum_{j=1}^{2^{k}}\mathbb{P}(u_{j})q_{ij}
\label{eq:2p1}
\vspace{-1.5ex}
\end{sequation}where\vspace{-1ex}
\begin{footnotesize}
\[
q_{ij}=\prod_{\substack{t=1,...,k\\
c_{t}^{(j)}=T\\
A_{c_{t}}^{(i)}=Y
}
}P_{W_{t}}\prod_{\substack{t=1,...,k\\
c_{t}^{(j)}=T\\
A_{c_{t}}^{(i)}=N
}
}\left(1-P_{W_{t}}\right)\prod_{\substack{t=1,...,k\\
c_{t}^{(j)}=F\\
A_{c_{t}}^{(i)}=Y
}
}\left(1-P_{W_{t}}\right)\prod_{\substack{t=1,...,k\\
c_{t}^{(j)}=F\\
A_{c_{t}}^{(i)}=N
}
}P_{W_{t}}
\]
\end{footnotesize}\textbf{Computation of $\mathbb{P}(a_i|m_j)$}:  Similar to Single CCQ, Eq~\ref{accuracyConditionedonPR}, $\mathbb{P}(a_i|m_j)$ depends on whether correspondences are in the possible matching $m_i$. In definition 10 we assume that each CCQ is answered independently. Therefore, given that $m_j$ is the correct matching, we know the correct answers for $k$ CCQs and answers $A_{c_{1}}^{(i)}$, $A_{c_{2}}^{(i)}$, ..., $A_{c_{k}}^{(i)}$ are $k$ independent Bernoulli random variables. It follows that\vspace{-1ex}
\begin{footnotesize}
\begin{align*} & \mathbb{P}(a_{i}|m_{j})=\\
 & \quad\prod_{\substack{t=1,...,k\\
c_{t}\in m_{j}\\
A_{c_{t}}^{(i)}=Y
}
}P_{W_{t}}\prod_{\substack{t=1,...,k\\
c_{t}\in m_{j}\\
A_{c_{t}}^{(i)}=N
}
}\left(1-P_{W_{t}}\right)\prod_{\substack{t=1,...,k\\
c_{t}\notin m_{j}\\
A_{c_{t}}^{(i)}=Y
}
}\left(1-P_{W_{t}}\right)\prod_{\substack{t=1,...,k\\
c_{t}\notin m_{j}\\
A_{c_{t}}^{(i)}=N
}
}P_{W_{t}}
\end{align*}
\vspace{-1.5ex}
\end{footnotesize}

\textbf{Running Example:} In the example of Table 1, we assume two CCQs $c_1$ and $c_2$ are answered by two workers with $P_{W_1}=0.8$ and $P_{W_2}=0.6$. Domains of correspondences and answers are\vspace{-1.5ex}
\begin{footnotesize}
\begin{align*}
D_{U} & =\{(T,T),(T,F),(F,T),(F,F)\}\\
D_{A} & =\{(Y,Y),(Y,N),(N,Y),(N,N)\}
\end{align*}
\end{footnotesize}
Probability distribution for $D_U$ is given by Eq~\ref{eq:Pui}:\vspace{-1ex}
\begin{footnotesize}
\begin{align*}
\mathbb{P}(u_{i}=(T,T))=\mathbb{P}(m_{1}) & =0.45\\
\mathbb{P}(u_{i}=(T,F))=\mathbb{P}(m_{2}) & =0.3\\
\mathbb{P}(u_{i}=(F,T))=\mathbb{P}(m_{3}) & =0.25\\
\mathbb{P}(u_{i}=(F,F)) & =0
\end{align*}
\end{footnotesize}Probability distribution for $D_A$ is given by Eq~\ref{eq:2p1}: $\mathbb{P}(a_{i}=(Y,Y))=0.45P_{W_{1}}P_{W_{2}}+0.3P_{W_{1}}(1-P_{W_{2}})+0.25(1-P_{W_{1}})P_{W_{2}}=0.342$.
Similarly, $\mathbb{P}(a_i=(Y,N))=0.308$, $\mathbb{P}(a_i=(N,Y))=0.198$ and $\mathbb{P}(a_i=(N,N))=0.152$. Given $m_1$ is the correct matching, we know that $c_1$ and $c_2$ are T. Thus\vspace{-1ex}
\begin{footnotesize}
\begin{align*}
 & \mathbb{P}(a_{i}=(Y,Y)|m_{1})=P_{W_{1}}P_{W_{2}}=0.48\\
 & \mathbb{P}(a_{i}=(Y,N)|m_{1})=P_{W_{1}}(1-P_{W_{2}})=0.32\\
 & \mathbb{P}(a_{i}=(N,Y)|m_{1})=(1-P_{W_{1}})P_{W_{2}}=0.12\\
 & \mathbb{P}(a_{i}=(N,N)|m_{1})=(1-P_{W_{1}})(1-P_{W_{2}})=0.08
\end{align*}
\vspace{-7ex}
\end{footnotesize}

\subsection{Framework of Multiple CCQ}
\vspace{-1ex}
As shown in Framework~\ref{Multiple CCQ}, the best size-$k$ set of
CCQs are initially selected and published with accuracy rates to show their hardness, and then we constantly
monitor their states. Whenever one or more answers are available,
three operations are conducted. First, all CCQs at state ``waiting''
are withdrawn. Second, the probability distribution is adjusted with
the new answers (line 8\&9). Last, we regenerate and publish a set
of CCQs that are currently most contributive (lines 12\&15). In
general, we keep the best $k$ CCQs in the crowd, by interactively
changing CCQs based on newly received answers. Note that the number
of CCQs may be less than $k$ when the budget is insufficient (line
14-16). The whole procedure terminates when the budget runs out and
all the CCQs are answered (line 3).

In contrast with Single CCQ, the essential query of Multiple CCQ is
to find a group of $k$ CCQs, which  maximize the uncertainty reduction. Formally, we have following definition:\vspace{-1ex}
\begin{definition}[Multiple CCQ Selection (MCCQS)]
\label{MCCQS}
~\\
Given result set \textit{R}, probability assignment function $\mathbb{P}()$, and an integer $k$, the multiple CCQ selection problem is to retrieve a set of $k$ CCQs, denoted by $S_{Q}$, such that the uncertainty reduction, $\Delta H_{S_{Q}}$, is maximized.\vspace{-1ex}
\end{definition}

One can see that, if we set $k=B$ (recall $B$ is the budget of CCQs), the problem of MCCQS selects the optimal set of correspondences at which to ask CCQs in order
to maximize the uncertainty reduction. Similar to \cite{Parameswaran:2011:HGS:1952376.1952377} and \cite{ERCrwodsourcing},  MCCQS itself is an interesting and valuable optimization problem to investigate.\vspace{-2ex}
\begin{salgorithm}
\floatname{algorithm}{\small{Framework}}
\caption{\small{Multiple CCQ}}
\label{Multiple CCQ}
\begin{algorithmic}[1]
 \State $CONS\gets k$ // consumption of the budget
 \State given initial accuracy rates for all correspondences, find and publish a set of CCQs - $S_{Q} = \{Q_{c_1},Q_{c_2},...,Q_{c_k}\}$ that maximize $\Delta H_{S_{Q}}$//(See ~\ref{subsection:Multiple-CCQ-Form})
  \While {there exists CCQs in the crowd, we constantly monitor the CCQs}
 \If{receive the one or more answers $A_1,A_2,..$ with accuracy $P_{W_1},P_{W_2},...$}
  \State withdraw all the CCQs at waiting state
  \State $k'\gets$ $the~number~of~CCQs~withdrawn$
  \State $k''\gets$ $the~number~of~answers~received$
      \For{ each $A_i$,$P_{W_i}$} //Adjustment
          \State $\forall m_i\in R$ Adjust the $\mathbb{P}(m_i)$ by $\mathbb{P}(m_i|A_1,A_2,..)$  //
          \EndFor
\If {$CONS+k''<=B$}
\State find a set of CCQs - $S_{Q}'$ of size $(k'+k'')$ that currently maximize $\Delta H_{S_{Q}'}$ //(See \ref{subsection:Multiple-CCQ-Form})
\State  $CONS = CONS+k''$
\Else  ~// no sufficient budget for maintaining k CCQs
\State find a set of CCQs - $S_{Q}'$ of size $(B-CONS)$ that currently maximize $\Delta H_{S_{Q}'}$ //(See \ref{subsection:Multiple-CCQ-Form})
 \State $CONS = B - k'$
\EndIf
\State $\forall Q_{c_i}'\in S_{Q}'$ publish $Q_{c_i}'$
\EndIf
\EndWhile
\vspace{-1ex}
\end{algorithmic}
\end{salgorithm}
\vspace{-1ex}
\subsection{Simplification of Multiple CCQ Selection}
\label{proofMCCQ}
\vspace{-1ex}
In case of Single CCQ, considering each CCQ as
a random variable, we proved that the uncertainty reduction of a CCQ is equivalent to entropy of answer minus entropy of crowd. In Multiple CCQ, analogously, we are interested to find a relation between uncertainty reduction and entropy for a size-k set of CCQs. This is complex since the correspondences are correlated.

As shown in Theorem~\ref{smp}, under the assumption  that crowds give correct answers with accuracy probability, we prove that the uncertainty
reduction by a set of CCQs is equivalent to their \textit{\textbf{joint entropy}}
(denoted by \textbf{$H(D_A)$}) minus sum of entropies of crowds, while in previous conference paper \cite{DBLP:pvldb/ZhangCJC13}, the result can be viewed as a special case of this result when crowds' accuracies equal to 1. Facilitated with this theorem, we could reduce MCCQS to a special case of joint entropy maximization problem. Similarly with definition 9, the joint entropy $H(D_A)$ of answers $A_{c_{1}}$, $A_{c_{2}}$, ..., $A_{c_{k}}$ w.r.t. CCQs $Q_{c_1}$, $Q_{c_2}$, ..., $Q_{c_k}$ are defined by\vspace{-1ex}
\begin{sequation}\label{HDA}
H(D_A) = -\sum_{a_i\in D_A}\mathbb{P}(a_i)\log\mathbb{P}(a_i)\vspace{-1ex}
\end{sequation}where $\mathbb{P}(a_i)$ can be computed by Eq~\ref{eq:2p1}.
\begin{theorem}
\label{smp}
Given result set \textit{R}, probability assignment function $\mathbb{P}()$, a set of CCQs $S_{Q}=\{Q_{c_1},Q_{c_2},...,Q_{c_k}\}$, answers $A_{c_{1}}$, $A_{c_{2}}$, ..., $A_{c_{k}}$, accuracies of crowd workers $P_{W_1}$, $P_{W_2}$, ..., $P_{W_k}$ in $[0.5,1]$, we have\vspace{-2ex}
\begin{footnotesize}
\[
\Delta H_{S_{Q}} = H(D_A)-\sum_{t=1}^{k}H(W_{t})\vspace{-2ex}
\]
\end{footnotesize}
\end{theorem}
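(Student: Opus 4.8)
The plan is to mirror the proof of Theorem~\ref{Simp-Single}, carried out at the level of the joint answer vector $(A_{c_1},\dots,A_{c_k})$ instead of a single Bernoulli answer. Starting from the Bayes form of Eq~\ref{eq:multUD_exp}, I would substitute $\mathbb{P}(m_j\mid a_i)=\mathbb{P}(m_j)\mathbb{P}(a_i\mid m_j)/\mathbb{P}(a_i)$ and split the logarithm into three parts, so that $\Delta H_{S_Q}=H(R)+T_1+T_2+T_3$, with $T_1=\sum_{m_j,a_i}\mathbb{P}(m_j)\mathbb{P}(a_i\mid m_j)\log\mathbb{P}(m_j)$, $T_2=\sum_{m_j,a_i}\mathbb{P}(m_j)\mathbb{P}(a_i\mid m_j)\log\mathbb{P}(a_i\mid m_j)$, and $T_3=-\sum_{m_j,a_i}\mathbb{P}(m_j)\mathbb{P}(a_i\mid m_j)\log\mathbb{P}(a_i)$, adopting the convention $0\log 0=0$ so that zero-probability matchings and answer tuples drop out harmlessly. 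The two outer terms are routine: conditioned on a fixed correct matching $m_j$ the answer vector ranges over all of $D_A$ with total mass one, i.e. $\sum_{a_i\in D_A}\mathbb{P}(a_i\mid m_j)=1$, hence $T_1=\sum_{m_j\in R}\mathbb{P}(m_j)\log\mathbb{P}(m_j)=-H(R)$, which cancels the leading $H(R)$; dually, the law of total probability --- which is exactly Eq~\ref{eq:2p1} together with Eq~\ref{eq:Pui} --- gives $\sum_{m_j\in R}\mathbb{P}(m_j)\mathbb{P}(a_i\mid m_j)=\mathbb{P}(a_i)$, so $T_3=-\sum_{a_i\in D_A}\mathbb{P}(a_i)\log\mathbb{P}(a_i)=H(D_A)$ by Eq~\ref{HDA}.

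The crux is $T_2$. Here I would invoke the independence assumption of the Problem Statement (each CCQ is answered independently): conditioned on $m_j$, the answers $A_{c_1}^{(i)},\dots,A_{c_k}^{(i)}$ are independent Bernoulli variables, so $\mathbb{P}(a_i\mid m_j)=\prod_{t=1}^{k}\mathbb{P}(A_{c_t}^{(i)}\mid m_j)$, each factor equal to $P_{W_t}$ or $1-P_{W_t}$ according to whether $c_t\in m_j$ and whether $A_{c_t}^{(i)}$ is $Y$ or $N$ --- precisely the displayed formula for $\mathbb{P}(a_i\mid m_j)$. Taking logarithms turns the product into a sum over $t$, giving $T_2=\sum_{t=1}^{k}\sum_{m_j\in R}\mathbb{P}(m_j)\sum_{a_i\in D_A}\mathbb{P}(a_i\mid m_j)\log\mathbb{P}(A_{c_t}^{(i)}\mid m_j)$. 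For fixed $t$ and $m_j$ the innermost summand depends on $a_i$ only through the coordinate $A_{c_t}^{(i)}$, so marginalizing the other $k-1$ coordinates collapses the inner sum to $\mathbb{P}(A_{c_t}=Y\mid m_j)\log\mathbb{P}(A_{c_t}=Y\mid m_j)+\mathbb{P}(A_{c_t}=N\mid m_j)\log\mathbb{P}(A_{c_t}=N\mid m_j)$, which equals $P_{W_t}\log P_{W_t}+(1-P_{W_t})\log(1-P_{W_t})=-H(W_t)$ whether or not $c_t\in m_j$ --- the two cases just swap $P_{W_t}$ and $1-P_{W_t}$. Since this value is independent of $m_j$, summing over $m_j$ with $\sum_{m_j}\mathbb{P}(m_j)=1$ yields $T_2=-\sum_{t=1}^{k}H(W_t)$, and assembling $\Delta H_{S_Q}=(H(R)+T_1)+T_2+T_3=0-\sum_{t=1}^{k}H(W_t)+H(D_A)$ gives the claim.

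The main obstacle is the bookkeeping inside $T_2$: one must justify the conditional factorization of $\mathbb{P}(a_i\mid m_j)$ cleanly and then verify that marginalizing over every answer coordinate other than $t$ really does reduce each term, uniformly in $m_j$, to the single-worker entropy $H(W_t)$ of Eq~\ref{eq:CrowdEntropy}. This is the multivariate analogue of the step $J_2+J_5=-H(W)$ in Theorem~\ref{Simp-Single} (see Eq~\ref{eq:j2j5}), and it is the only place where the independence assumption and the symmetry of $H(W_t)$ under $P_{W_t}\mapsto 1-P_{W_t}$ are genuinely used; everything else is the same $0\log 0$-safe manipulation of conditional entropy already rehearsed in the Single CCQ case.
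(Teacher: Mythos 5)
Your proposal is correct and follows essentially the same route as the paper's own proof: the same three-term decomposition of Eq~\ref{eq:multUD_exp} (your $T_1,T_2,T_3$ are the paper's $J_1,J_2,J_3$), the same normalization and total-probability arguments giving $-H(R)$ and $H(D_A)$, and the same use of conditional independence given $m_j$ plus the invariance of $H(W_t)$ under $P_{W_t}\mapsto 1-P_{W_t}$ to collapse the middle term to $-\sum_{t=1}^{k}H(W_t)$. Your explicit marginalization over the other $k-1$ answer coordinates is just a more spelled-out version of the paper's appeal to the additivity of joint entropy for independent variables.
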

\begin{proof}
By Eq~\ref{eq:multUD_exp}, we have\vspace{-1ex}
\begin{footnotesize}
\begin{align*} & \Delta H_{S_{Q}}\\
 & =H(R)+\sum_{\substack{m_{j}\in R\\
a_{i}\in D_{A}
}
}\mathbb{P}(m_{j})\mathbb{P}(a_{i}|m_{j})\log\mathbb{P}(m_{j})\\
 & \quad+\sum_{\substack{m_{j}\in R\\
a_{i}\in D_{A}
}
}\left[\mathbb{P}(m_{j})\mathbb{P}(a_{i}|m_{j})\log\mathbb{P}(a_{i}|m_{j})-\mathbb{P}(m_{j})\mathbb{P}(a_{i}|m_{j})\log\mathbb{P}(a_{i})\right]\\
 & \coloneqq H(R)+J_{1}+J_{2}+J_{3}\vspace{-1ex}
\end{align*}
\vspace{-1ex}
\end{footnotesize}
By definition 5, we have\vspace{-1ex}
\begin{footnotesize}
\begin{align*}J_{1} & =\sum_{\substack{m_{j}\in R}
}\left[\sum_{\substack{a_{i}\in D_{A}}
}\mathbb{P}(a_{i}|m_{j})\right]\mathbb{P}(m_{j})\log\mathbb{P}(m_{j})\\
 & =\sum_{\substack{m_{j}\in R}
}\mathbb{P}(m_{j})\log\mathbb{P}(m_{j})=-H(R)\vspace{-1ex}
\end{align*}
\end{footnotesize}
By Eq~\ref{HDA}, we have\vspace{-1.3ex}
\begin{footnotesize}
\begin{align*}J_{3} & =-\sum_{\substack{a_{i}\in D_{A}}
}\left[\sum_{\substack{m_{j}\in R}
}\mathbb{P}(m_{j}|a_{i})\right]\mathbb{P}(a_{i})\log\mathbb{P}(a_{i})\\
 & =-\sum_{\substack{a_{i}\in D_{A}}
}\mathbb{P}(a_{i})\log\mathbb{P}(a_{i})=H(D_{A})
\end{align*}
\end{footnotesize}Given $m_i$, $A_{c_{1}}$, ..., $A_{c_{k}}$ are independent. For $J_2$, by the property of joint entropy of independent random variables, we have\vspace{-1ex}
\begin{footnotesize}
\begin{align*}\sum_{\substack{a_{i}\in D_{A}}
}\mathbb{P}(a_{i}|m_{j})\log\mathbb{P}(a_{i}|m_{j}) & =-H\left(\left.A_{c_{1},},...,A_{c_{k}}\right|m_{j}\right)\\
 & =-\sum_{t=1}^{k}H\left(\left.A_{c_{t}}\right|m_{j}\right)
\end{align*}
\end{footnotesize}Therefore, similarly with Eq~\ref{eq:j2j5},\vspace{-1.3ex}
\begin{footnotesize}
\begin{align*}J_{2} & =-\sum_{\substack{m_{j}\in R}
}\mathbb{P}(m_{j})\sum_{t=1}^{k}H\left(\left.A_{c_{t}}\right|m_{j}\right)\\
 & =\sum_{t=1}^{k}\left[\sum_{\substack{m_{j}\in R}
}\mathbb{P}(m_{j})\mathbb{P}(\left.A_{c_{t}}=Y\right|m_{j})\log\mathbb{P}(\left.A_{c_{t}}=Y\right|m_{j})\right.\\
 & \quad+\left.\sum_{\substack{m_{j}\in R}
}\mathbb{P}(m_{j})\mathbb{P}(\left.A_{c_{t}}=N\right|m_{j})\log\mathbb{P}(\left.A_{c_{t}}=N\right|m_{j})\right]\\
 & =\sum_{t=1}^{k}\left[P_{W_{t}}\log P_{W_{t}}+\left(1-P_{W_{t}}\right)\log\left(1-P_{W_{t}}\right)\right]=-\sum_{t=1}^{k}H(W_{t})
\end{align*}
\end{footnotesize}This completes the proof.
\end{proof}\vspace{-4.3ex}

\subsection{Upper bound and lower bound of Uncertainty Reduction}
\vspace{-1ex}
In this subsection, we show the upper and lower bounds for $H(D_A)$, which can be applied to improve approximate algorithm. We recall $U=\{c_{t};t=1,...,k\}$, $D_U$, $\mathcal{P}_U$  Eq~\ref{eq:DUandP} and $\mathbb{P}(u_i)$ Eq~\ref{eq:Pui}. Now we define joint entropy $H(D_{U})$ by\vspace{-1ex}
\begin{sequation}\label{HDU}
H(D_U)=-\sum_{u_{i}\in D_{U}}\mathbb{P}(u_i)\log\mathbb{P}(u_i)\vspace{-1ex}
\end{sequation}We remark that $H(D_U)$ measures the uncertainty of $k$ correspondences, while $H(D_A)$ measures the uncertainty of answers for $k$ correspondences. Intuitively, this difference is caused by the fact that crowds make mistakes. If $P_{W_i}=1$ for all $i=1,...,k$, $H(D_U)=H(D_A)$.

As mentioned in subsection 4.1, the number of elements with positive probability in $D_U$ is at most $|R|$. Time complexity of computing all $\mathbb{P}(u_{i})$ is bounded by $\mathcal{O}(k|R|)$. However $|D_A|=2^k$, by Eq~\ref{eq:2p1}, time complexity of computing all  $\mathbb{P}(a_{i})$  will be $\mathcal{O}(2^k)$. Thus we hope to bound $H(D_A)$ by  $H(D_U)$.\vspace{-0.8ex}
\begin{theorem}
  Under the assumption of Theorem 4.1, let\vspace{-0.8ex}
\begin{footnotesize}
\[
h^{(u)}(D_{A})=\min\left\{ H(D_{U})+\sum_{t=1}^{k}H(W_{t}),-\sum_{t=1}^{k}\log\left(1-P_{W_{t}}\right)\right\}\vspace{-1ex}
\]
\end{footnotesize}
  and\vspace{-0.8ex}
\begin{footnotesize}
\begin{align*}
 & h^{(l)}(D_{A})\\
 & =\max\left\{ -\sum_{t=1}^{k}\log P_{W_{t}},\quad H(D_{U})+\sum_{t=1}^{k}H(W_{t})\right.\\
 & \quad+\prod_{\substack{t=1}
}^{k}P_{W_{t}}\log\prod_{\substack{t=1}
}^{k}P_{W_{t}}+\left(1-\prod_{\substack{t=1}
}^{k}P_{W_{t}}\right)\log\left(1-\prod_{\substack{t=1}
}^{k}P_{W_{t}}\right)\\
 & \quad\left.-\left(1-\prod_{\substack{t=1}
}^{k}P_{W_{t}}\right)\min\left\{ \log(2^{k}-1),H(D_{U})\right\} \right\}
\end{align*}\end{footnotesize}We have
\begin{sequation}\label{UandLbound}
 h^{(l)}(D_A)\leq H(D_A)\leq h^{(u)}(D_A)\vspace{-1ex}
\end{sequation}
\end{theorem}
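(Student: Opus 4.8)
My plan is to route the whole statement through one structural identity together with four elementary estimates; only the last of these, a sharpened Fano bound, requires a real idea. By Theorem~\ref{smp} we have $H(D_A)-\sum_{t=1}^{k}H(W_t)=\Delta H_{S_Q}$, and this quantity is exactly the mutual information $I(D_U;D_A)$ between the (correlated) truth vector $U=(c_1,\dots,c_k)$ and the answer vector $A=(A_{c_1},\dots,A_{c_k})$. Conditioned on $D_U$ the $k$ answers are mutually independent, each a Bernoulli$(P_{W_t})$-noisy copy of the corresponding $c_t$ (the answer distribution depends on the correct matching only through $U$), so $H(D_A\mid D_U)=\sum_t H(W_t)$; by symmetry of mutual information, $\Delta H_{S_Q}=H(D_A)-H(D_A\mid D_U)=H(D_U)-H(D_U\mid D_A)$. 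This gives the master identity
\[
H(D_A)=H(D_U)+\sum_{t=1}^{k}H(W_t)-H(D_U\mid D_A),
\]
with $0\le H(D_U\mid D_A)\le H(D_U)$; already $H(D_U\mid D_A)\ge 0$ yields the first argument of $h^{(u)}$.

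Next I would control the individual answer-profile probabilities. By Eq~\ref{eq:2p1}, each $\mathbb{P}(a_i\mid m_j)$ is a product of $k$ factors, every one equal to $P_{W_t}$ or $1-P_{W_t}$; since $P_{W_t}\ge 1/2$, it lies in $[\prod_t(1-P_{W_t}),\,\prod_t P_{W_t}]$ for every $m_j$, hence so does $\mathbb{P}(a_i)=\sum_j\mathbb{P}(m_j)\mathbb{P}(a_i\mid m_j)$, a convex combination since $\sum_j\mathbb{P}(m_j)=1$. Substituting these two bounds into $H(D_A)=\sum_i\mathbb{P}(a_i)\bigl(-\log\mathbb{P}(a_i)\bigr)$, using monotonicity of $-\log$ and $\sum_i\mathbb{P}(a_i)=1$, gives $-\sum_t\log P_{W_t}\le H(D_A)\le -\sum_t\log(1-P_{W_t})$ (the upper side being vacuous when some $P_{W_t}=1$). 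Together with the first paragraph this disposes of all of $h^{(u)}$ and of the $-\sum_t\log P_{W_t}$ argument of $h^{(l)}$.

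The remaining, substantive bound is $H(D_U\mid D_A)\le H_b(\pi)+(1-\pi)\min\{\log(2^k-1),\,H(D_U)\}$, where $\pi:=\prod_{t=1}^{k}P_{W_t}$ and $H_b(x):=-x\log x-(1-x)\log(1-x)$; inserting it into the master identity and using $-H_b(\pi)=\pi\log\pi+(1-\pi)\log(1-\pi)$ reproduces exactly the second argument of $h^{(l)}(D_A)$. I would prove it by a refined Fano argument. Let $B$ be the indicator of the event that the answer vector coincides with the true pattern of $U$ (equivalently, all $k$ workers answer correctly). The crucial point — and the step I expect to be the real obstacle — is that $\mathbb{P}(B=1\mid D_U=u)=\prod_t P_{W_t}=\pi$ for \emph{every} $u$ (given the correct matching, ``all answers correct'' has probability $\pi$ irrespective of which matching it is), so $B$ is \emph{independent} of $D_U$ with $\mathbb{P}(B=1)=\pi$. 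Then $H(D_U\mid D_A)\le H(D_U,B\mid D_A)=H(B\mid D_A)+H(D_U\mid B,D_A)\le H_b(\pi)+H(D_U\mid B,D_A)$; on $\{B=1\}$ we have $U=A$, contributing $0$, while on $\{B=0\}$, conditioning on $A=a$ forces $U\neq a$ (entropy $\le\log(2^k-1)$) and, because $B$ is independent of $D_U$, the law of $U$ restricted to $\{B=0\}$ is its prior law (entropy $H(D_U)$), so by ``conditioning reduces entropy'' that slice has entropy $\le\min\{\log(2^k-1),H(D_U)\}$; averaging with weight $\mathbb{P}(B=0)=1-\pi$ gives $H(D_U\mid B,D_A)\le(1-\pi)\min\{\log(2^k-1),H(D_U)\}$, as needed.

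Thus the main obstacle is the independence $B\perp D_U$, which is precisely what upgrades the crude Fano bound $(1-\pi)\log(2^k-1)$ to the stated minimum involving $H(D_U)$; everything else is bookkeeping — fixing one logarithm base throughout, handling the boundary accuracies $P_{W_t}\in\{1/2,1\}$, and checking that the $\mathbb{P}(a_i)$ and $\mathbb{P}(m_j)$ appearing under logarithms are strictly positive so the manipulations are legitimate.
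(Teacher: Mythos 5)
Your proposal is correct and follows essentially the same route as the paper's proof: the same chain-rule identity $H(D_A)=H(D_U)+\sum_{t=1}^{k}H(W_t)-H(D_U\mid D_A)$ with $H(D_A\mid D_U)=\sum_t H(W_t)$, the same product bounds $\prod_t(1-P_{W_t})\le \mathbb{P}(a_i\mid m_j)\le \prod_t P_{W_t}$ yielding the two logarithmic bounds, and the same Fano-type argument via the all-correct indicator (your $B$ is the paper's $1-Y$). The one point where you are more careful than the paper is the $H(D_U)$ half of the $\min$: the paper invokes $H(D_U\mid a_i,Y=1)\le H(D_U)$ pointwise, which ``conditioning reduces entropy'' does not directly justify, whereas your observation that $B$ is independent of $D_U$ gives it correctly on average via $H(D_U\mid B=0,D_A)\le H(D_U\mid B=0)=H(D_U)$.
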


\begin{proof}
\textbf{Upper bound}: By the chain rule of conditional entropy, we have\vspace{-1ex}
\begin{sequation}\label{ChainRuleForUB}
\begin{aligned}
H(D_{A}) & =H(D_{A},D_{U})-H(D_{U}|D_{A})\\
 & =H(D_{A}|D_{U})+H(D_{U})-H(D_{U}|D_{A})\\
\end{aligned}\vspace{-1ex}
\end{sequation}where\vspace{-1ex}
\begin{footnotesize}
\[
H(D_{A}|D_{U})=-\sum_{\substack{u_{j}\in D_{U}}
}\mathbb{P}(u_{j})\left[\sum_{\substack{a_{i}\in D_{A}}
}\mathbb{P}(a_{i}|u_{j})\log\mathbb{P}(a_{i}|u_{j})\right]\vspace{-1ex}
\]
\end{footnotesize}Given $u_j$, we know the true correspondences and false ones in $U$, thus $A_{c_t}$, $t=1,...,k$ are independent. We obtain that\vspace{-1ex}
\begin{footnotesize}
\begin{align*} & H(D_{A}|D_{U})\\
 & =\sum_{\substack{u_{j}\in D_{U}}
}\mathbb{P}(u_{j})\sum_{t=1}^{k}H\left(\left.A_{c_{t}}\right|u_{j}\right)\\
 & =-\sum_{t=1}^{k}[\sum_{\substack{u_{j}\in D_{U}}
}\mathbb{P}(u_{j})\mathbb{P}(\left.A_{c_{t}}=Y\right|u_{j})\log\mathbb{P}(\left.A_{c_{t}}=Y\right|u_{j})\\
 & \quad+\sum_{\substack{u_{j}\in D_{U}}
}\mathbb{P}(u_{j})\mathbb{P}(\left.A_{c_{t}}=N\right|u_{j})\log\mathbb{P}(\left.A_{c_{t}}=N\right|u_{j})]\\
 & =-\sum_{t=1}^{k}[\sum_{\substack{u_{j}\in D_{U}\\
c_{t}^{(j)}=F
}
}\mathbb{P}(u_{i})\left(1-P_{W_{t}}\right)\log\left(1-P_{W_{t}}\right)\\
 & \quad+\sum_{\substack{u_{j}\in D_{U}\\
c_{t}^{(j)}=T
}
}\mathbb{P}(u_{i})P_{W_{t}}\log P_{W_{t}}+\sum_{\substack{u_{j}\in D_{U}\\
c_{t}^{(j)}=F
}
}\mathbb{P}(u_{i})P_{W_{t}}\log P_{W_{t}}\\
 & \quad+\sum_{\substack{u_{j}\in D_{U}\\
c_{t}^{(j)}=T
}
}\mathbb{P}(u_{i})\left(1-P_{W_{t}}\right)\log\left(1-P_{W_{t}}\right)]\\
 & =\sum_{t=1}^{k}H(W_{t})
\end{align*}
\end{footnotesize}Thus we get\vspace{-1ex}
\begin{sequation}
H(D_A)\leq H(D_U)+\sum_{t=1}^{k}H(W_{k})\label{eq:up1}\vspace{-1.5ex}
\end{sequation}On the other hand, by definition of $H(D_A)$ Eq~\ref{HDA} and $\mathbb{P}(a_{i})$ Eq~\ref{eq:2p1}, we have\vspace{-1ex}
\begin{sequation}
H(D_{A})=-\sum_{i=1}^{2^{k}}\sum_{j=1}^{2^{k}}\mathbb{P}(u_{j})q_{ij}\log\mathbb{P}(u_{j})q_{ij}\label{eq:up2}\vspace{-1ex}
\end{sequation}Note that $\sum_{j=1}^{2^{k}}\mathbb{P}(u_{j})=1$, which means $\sum_{j=1}^{2^{k}}\mathbb{P}(u_{j})q_{ij}$ is a linear combination of $q_{ij}$, $j=1,2,...,2^k$. It is easy to see that\vspace{-1ex}
\begin{footnotesize}
\[
\sum_{j=1}^{2^{k}}\mathbb{P}(u_{j})q_{ij}\geq\min_{j}q_{ij}=\prod_{t=1}^{k}\left(1-P_{W_{t}}\right)\vspace{-1ex}
\]
\end{footnotesize}where last equation holds because each $P_{W_t} \in [0.5,1]$. Then we have\vspace{-1ex}
\begin{footnotesize}
\[
H(D_{A})\leq-\sum_{i=1}^{2^{k}}\sum_{j=1}^{2^{k}}\mathbb{P}(u_{j})q_{ij}\log\prod_{t=1}^{k}\left(1-P_{W_{t}}\right)=-\sum_{t=1}^{k}\log\left(1-P_{W_{t}}\right)\vspace{-1ex}
\]
\end{footnotesize}Together with Eq~\ref{eq:up1}, we achieve the upper bound.

\textbf{Lower bound}: The difference between $H(D_A)$ and $H(D_U)$ is that crowds have probability to make mistakes. Inspired by this, we consider the indicator function that crowds make at least one mistake, i.e.\vspace{-1ex}
\begin{sequation}\label{indicator}
\begin{aligned}
Y=\begin{cases}
0 & \mathrm{Answers\;are\;all\;correct}\\
1 & \mathrm{Crowds\;make\;mistake}
\end{cases}
\vspace{-2ex}
\end{aligned}
\vspace{-2ex}
\end{sequation}Obviously, we have $\mathbb{P}(Y=0)=\prod_{\substack{t=1}}^{k}P_{W_{t}}$.
In order to obtain lower bound, it is sufficient to bound the term $H(D_U|D_A)$ in Eq~\ref{ChainRuleForUB}. Thus we rewrite\vspace{-2ex}
\begin{footnotesize}
\begin{align}
 & H\left(D_{U}\left|D_{A}\right.\right)\nonumber \\
 & \quad=H\left(D_{U}\left|D_{A}\right.\right)-H\left(D_{U}\left|D_{A},Y\right.\right)+H\left(D_{U}\left|D_{A},Y\right.\right)\nonumber \\
 & \quad=H\left(Y\left|D_{A}\right.\right)-H\left(Y\left|D_{A},D_{U}\right.\right)+H\left(D_{U}\left|D_{A},Y\right.\right)\label{eq:chainrule2}\\
 & \quad=H\left(Y\left|D_{A}\right.\right)+H\left(D_{U}\left|D_{A},Y\right.\right)\nonumber \\
 & \quad\leq H\left(Y\right)+H\left(D_{U}\left|D_{A},Y\right.\right)\nonumber \\
 & \quad=H\left(Y\right)+\sum_{\substack{a_{i}\in D_{A}}
}\left[\mathbb{P}(a_{i},Y=0)H\left(D_{U}\left|a_{i},Y=0\right.\right)\right.\label{eq:lb1}\\
 & \qquad+\left.\mathbb{P}(a_{i},Y=1)H\left(D_{U}\left|a_{i},Y=1\right.\right)\right]\label{eq:lb2}\
\end{align}\end{footnotesize}where the second equation Eq~\ref{eq:chainrule2} is obtained by chain rule of entropy. Please note that when $Y=0$, $A_{c_t}^{(i)}=Y$ if $c_{t}^{(i)}=T$ and $A_{c_t}^{(i)}=N$ if $c_{t}^{(i)}=F$. Thus in Eq~\ref{eq:lb1}, we have\vspace{-1.5ex}
\begin{footnotesize}
\[
H\left(D_{U}\left|a_{i},Y=0\right.\right)=0\vspace{-1ex}
\]\end{footnotesize}The entropy is maximized when each possible outcome has the same probability. Since $|D_U|=2^k$ and when $Y=1$, we know that the number of possible outcome is $2^k-1$. Therefore in  Eq~\ref{eq:lb2}, we have\vspace{-2ex}
\begin{footnotesize}
\[
H\left(D_{U}\left|a_{i},Y=1\right.\right)\leq\min\left\{ H(D_{U}),\log\left(2^{k}-1\right)\right\}\vspace{-1ex}
\]\end{footnotesize}Now we write\vspace{-1.5ex}
\begin{footnotesize}
\begin{align*} & H\left(D_{U}\left|D_{A}\right.\right)\\
 & \quad\leq H\left(Y\right)+\min\left\{ H(D_{U}),\log\left(2^{k}-1\right)\right\} \sum_{\substack{a_{i}\in D_{A}}
}\mathbb{P}(a_{i},Y=1)\\
 & \quad=H\left(Y\right)+\min\left\{ H(D_{U}),\log\left(2^{k}-1\right)\right\} \mathbb{P}(Y=1)\\
 & \quad=-\prod_{\substack{t=1}
}^{k}P_{W_{t}}\log\prod_{\substack{t=1}
}^{k}P_{W_{t}}-\left(1-\prod_{\substack{t=1}
}^{k}P_{W_{t}}\right)\log\left(1-\prod_{\substack{t=1}
}^{k}P_{W_{t}}\right)\\
 & \qquad+\left(1-\prod_{\substack{t=1}
}^{k}P_{W_{t}}\right)\min\left\{ H(D_{U}),\log\left(2^{k}-1\right)\right\}
\end{align*}
\end{footnotesize}Substitute this bound into Eq~\ref{ChainRuleForUB}, we achieve that\vspace{-2ex}
\begin{footnotesize}
\begin{align*}
H\left(D_{A}\right) & \geq H(D_{U})+\sum_{t=1}^{k}H(W_{t})+\prod_{\substack{t=1}
}^{k}P_{W_{t}}\log\prod_{\substack{t=1}
}^{k}P_{W_{t}}\\
 & \quad+\left(1-\prod_{\substack{t=1}
}^{k}P_{W_{t}}\right)\log\left(1-\prod_{\substack{t=1}
}^{k}P_{W_{t}}\right)\\
 & \quad-\left(1-\prod_{\substack{t=1}
}^{k}P_{W_{t}}\right)\min\left\{ H(D_{U}),\log\left(2^{k}-1\right)\right\}
\end{align*}
\end{footnotesize}
On the other hand by Eq~\ref{eq:up2}, we have\vspace{-2ex}
\begin{footnotesize}
\[
H(D_{A})\geq-\sum_{i=1}^{2^{k}}\sum_{j=1}^{2^{k}}\mathbb{P}(u_{j})q_{ij}\log\prod_{t=1}^{k}P_{W_{t}} =-\sum_{t=1}^{k}\log P_{W_{t}}\vspace{-1ex}
\]
\end{footnotesize}This completes the proof.
\end{proof}
\vspace{-1ex}
\textbf{Remark:} When $P_{W_t}=1$ for all $t=1,...,k$, we can check that\vspace{-1ex}
\begin{footnotesize}
\[
h^{(l)}(D_{A})=h^{(u)}(D_{A})=H(D_U)=H(D_A)\vspace{-1ex}
\]\end{footnotesize}When $P_{W_t}=0.5$ for all $t=1,...,k$, we can check that\vspace{-1ex}
\begin{footnotesize}
\[
h^{(l)}(D_{A})=h^{(u)}(D_{A})=k=H(D_A)\vspace{-1ex}
\]\end{footnotesize}Our result is optimal in the sense that lower bound equals to upper bound in two extreme cases: When crowds always give correct answers ($P_{W_t}=1$) and when crowds always give random answers without any consideration ($P_ {W_t}=0.5$).\vspace{-2.5ex}

\subsection{NP-hardness of Multiple CCQ Selection}
\vspace{-1.2ex}
By Theorem~\ref{smp}, searching a group of $k$ CCQs with maximal uncertainty reduction is equivalent to \textit{finding k CCQs with maximal joint entropy}. It is known the joint entropy of a set of random variables is a \textit{monotone sub-modular function}. In general, maximizing sub-modular functions is NP-hard. Concerning the computation of the value of information, \cite{DBLP:Submodular} shows that, for a general reward function $R_j$ (in our problem, $R_j = \Delta H_{S_{Q}}$), it is $NP^{PP}-hard$ to select the optimal subset of variables even for discrete distributions that can be represented by polytree graphical models. $NP^{PP}-hard$ problems are believed to be much harder than $NPC$ or $\#PC$ problems. In the problem of multiple CCQ selection, every variable is binary and their marginal distribution is represented by a binary matrix. As a result, a naive traversal would lead to an algorithm of $\mathcal{O}(|R||C|^k)$ complexity, since the searching space (i.e. the number of subsets to select) is always of size $C_{|C|}^k$.

With the Theorem~\ref{nphard}, we prove that Multiple CCQ Selection is NP-hard. Encountering this NP-hardness, we propose a efficient approximation algorithm based on the sub-modularity of joint entropy.
\begin{theorem}\vspace{-1.9ex}
\label{nphard}
The Multiple CCQ Selection is NP-hard.\vspace{-1.9ex}
\end{theorem}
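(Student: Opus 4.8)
The plan is to collapse MCCQS to a purely combinatorial core by specialising Theorem~\ref{smp} to perfectly accurate workers. When $P_{W_t}=1$ for every $t$ we have $H(W_t)=0$, and in that case $H(D_A)=H(D_U)$ (as already observed), so $\Delta H_{S_Q}=H(D_U)$, the joint entropy of the chosen correspondences. Hence it suffices to show that the problem ``given the incidence matrix $M_R$ of a result set together with the matching probabilities and an integer $k$, select $k$ correspondences maximising $H(D_U)$'' is NP-hard, since MCCQS contains this as the special case $P_{W_t}\equiv 1$. I would first note that \emph{any} $0/1$ matrix arises this way: give each correspondence $c_j$ its own private attribute pair $(\{a_j\},\{b_j\})$ with all the $a_j\in S$ and $b_j\in T$ distinct, so that every subset of correspondences is automatically a valid possible matching; thus we may argue directly about abstract incidence matrices and a uniform probability assignment.

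Next I would reduce from the classical \emph{minimum test collection} (test set) problem, which is NP-complete: given a ground set $X$, tests $T_1,\dots,T_m\subseteq X$, and an integer $k$ (with $k\leq m$ assumed WLOG), decide whether some $k$ tests \emph{separate} $X$, i.e.\ for every pair of distinct $x,y\in X$ at least one selected test contains exactly one of $x,y$. From such an instance build a result set $R$ with one matching $m_x$ per $x\in X$, each of probability $1/|X|$, and a correspondence set $C=\{c_1,\dots,c_m\}$ with $c_j\in m_x$ iff $x\in T_j$. For a size-$k$ subset $S\subseteq C$, the outcome a matching $m_x$ contributes to $D_U$ is exactly its membership pattern across the tests in $S$, so two matchings collide in $D_U$ precisely when the corresponding elements are not separated by $S$.

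The quantitative heart of the argument is the elementary bound $H(D_U)=-\sum_i\mathbb{P}(u_i)\log\mathbb{P}(u_i)\leq\log|X|$, valid because the base probabilities are uniform over $|X|$ matchings, with equality \emph{iff} all $|X|$ outcome strings are distinct, i.e.\ iff $S$ separates $X$. Since adding correspondences only refines the partition (joint entropy is monotone and still capped at $\log|X|$), a separating set of fewer than $k$ tests can be padded to exactly $k$ without leaving the optimum. Hence the maximum of $\Delta H_{S_Q}$ over all size-$k$ CCQ sets equals $\log|X|$ if and only if the test-collection instance is a ``yes'' instance, and is strictly smaller otherwise. As the construction is polynomial, this establishes that MCCQS is NP-hard.

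I expect the main obstacle to be choosing the right source problem and pinning down the correspondence between separation and maximum entropy — in particular, checking that the bound $\log|X|$ is attained \emph{exactly} at full separation, and handling the monotone-padding step so that ``at most $k$ tests suffice'' lines up with ``exactly $k$ CCQs attain the optimum''. By contrast, the realizability of arbitrary incidence matrices as genuine schema-matching instances, and the polynomial-time bound on the reduction, are routine.
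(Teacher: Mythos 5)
Your reduction is correct, but it is a genuinely different route from the paper's. The paper proves NP-completeness of the decision version by reducing from \emph{set partition}: it restricts to $k=1$, imposes the structural constraint that every bipartition of $R$ is realized by some correspondence, and then uses Theorem~\ref{Simp-Single-c} to equate ``$\Delta H_{S_Q}\geq 1$'' with ``some correspondence has probability exactly $0.5$,'' which in turn encodes an equal-sum partition of the integers. You instead reduce from \emph{minimum test collection}, keeping $k$ general, realizing an arbitrary incidence matrix via private attribute pairs, and exploiting the cap $H(D_U)\leq\log|X|$ with equality exactly at full separation, plus monotonicity of joint entropy for the padding step. Each step you flag (injectivity iff separation, padding, polynomiality) goes through. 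Your approach buys two real advantages: first, it places the hardness where it actually lives, namely in selecting a $k$-element subset --- note that for $k=1$ the optimization is solvable in $\mathcal{O}(|R||C|)$ time by Theorem~\ref{Simp-Single-c}, so the paper's $k=1$ reduction can only be hard by virtue of how the instance is encoded; second, your construction uses polynomially many correspondences, whereas the paper's constraint (a correspondence for \emph{every} bipartition of $R$) forces $|C|$ to be exponential in $|R|$, which makes the polynomiality of that transformation questionable as written. The paper's argument is shorter and reuses the single-CCQ machinery; yours is self-contained, works for uniform probabilities, and genuinely exercises the multi-question structure.
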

\begin{proof}
To reach the proof of Theorem~\ref{nphard}, it is sufficient to prove the NP-completeness of its decision version, Decision MCCQS (DMCCQS), i.e. given result set \textit{R}, probability assignment function $\mathbb{P}()$, an integer k, and a value $\Delta H$, decide whether one can find a set $S_{Q}$ of k CCQs such that $\Delta H_{S_{Q}}>=\Delta H$.

To reach the NP-completeness of DMCCQS, it is sufficient to prove a special case of DMCCQS is NPC. First we let accuracy rates equals to 1. Moreover we state the special case of DMCCQS by adding the following constraint on $R$: for each way of partitioning $R$ into two subsets $S_1$ and $S_2$, there exists a correspondence $c$ such that
$(\forall m_i\in S_1, c\in m_i) \wedge (\forall m_j\in S_2, c\notin m_i)$. Equipped with this constraint, we this reduce special case of DMCCQS to the \textit{set partition problem}.

The partition problem is the task of deciding whether a given multiset of positive integers can be partitioned into two subsets $ S_1 $ and $ S_2 $ such that the sum of the numbers in $ S_1 $ equals the sum of the numbers in $ S_2 $.

\textbf{Transformation:}
Given a set partition problem with input multiset $S$, let $Sum = \sum_{x\in S}x$. We create a possible matching $m_i$ for each positive integer $x_i\in S$, and assign its possibility $\mathbb{P}(m_i)= x_i/Sum $. Let the correspondences satisfy the constraint, and we set $k=1, \Delta H= -\log(0.5)=1 $ for DMCCQS.

\textbf{($\Longrightarrow$)}
If there is a yes-certificate for the set partition problem, then the $R$ can be partitioned into two subsets, each with aggregate probability 0.5. According to the constraint, the exists a correspondence c with $\mathbb{P}(c)=0.5$. Then, selecting $S_{Q}=\{Q_c\}$ would achieve uncertainty reduction $H_{S_{Q}} = -0.5\log{0.5}-0.5\log{0.5}=1$. Therefore,$\{Q_c\}$ serves as yes-certificate for the special case of DMCCQS.

\textbf{($\Longleftarrow$)}
Assume there is yes-certificate for the special case of DMCCQS when $k=1, \Delta H= -\log(0.5)=1 $. Since $k=1$, $H_{S_{Q}} $is actually equivalent to $H_c$. Then by Theorem~\ref{Simp-Single-c}, there exists a correspondence $c$ such that $\mathbb{P}(c)=0.5$. Therefore, by the constraint, there is a way to partition $R$ into two subsets, each with aggregate probability 0.5. Since the mapping from the positive integers to the possible matchings is one-to-one, we obtain an yes-certificate for the special case of DMCCQS.\vspace{-1.9ex}
\end{proof}

\subsection{Approximation Algorithm}
\vspace{-1.2ex}
It is known that the joint entropy of a set of random variables is a \textit{monotone sub-modular function} \cite{DBLP:Submodular}. And the problem of selecting a k-element subset maximizing a monotone sub-modular function can be approximated with a performance guarantee of $(1-1/e)$, by iteratively selecting the most uncertain variable given the ones selected so far \cite{DBLP:journals/ipl/KhullerMN99}. Formally, we have the optimization function at the $k^{th}$ iteration:\vspace{-1ex}
\begin{sequation}
\begin{aligned}
\label{optimizationGoal}
X:= \arg\max_{Q_{c_k}} \Delta H_{S_Q^{k-1}\cup\{Q_{c_k}\}}
\end{aligned}
\end{sequation}
Let $A^{(k-1)}$ denote answers for $S_Q^{k-1}$. By the chain rule of conditional entropy, we have\vspace{-1ex}
\begin{footnotesize}
\[
H\left(D_{A^{(k-1)}},A_{c_k}\right)=H\left(D_{A^{(k-1)}}\right)+H\left(A_{c_{k}}\left|D_{A^{(k-1)}}\right.\right)\vspace{-1ex}
\]\end{footnotesize}Thus we only need to maximize the conditional entropy at each iteration, i.e.\vspace{-1ex}
\begin{footnotesize}
\[
X:= \arg\max_{A_{c_k}} H\left(A_{c_k}\left|D_{A^{(k-1)}}\right.\right)\vspace{-1ex}
\]\end{footnotesize}and\vspace{-1ex}
\begin{sequation}
\begin{aligned} & H\left(A_{c_{k}}\left|D_{A^{(k-1)}}\right.\right)\\
 & \quad=-\sum_{a_{i}\in D_{A^{(k-1)}}}\mathbb{P}(a_{i})\left[\mathbb{P}\left(\left.A_{c_{k}}=Y\right|a_{i}\right)\log\mathbb{P}\left(\left.A_{c_{k}}=Y\right|a_{i}\right)\right.\\
 & \qquad\left.+\mathbb{P}\left(\left.A_{c_{k}}=N\right|a_{i}\right)\log\mathbb{P}\left(\left.A_{c_{k}}=N\right|a_{i}\right)\right]
\end{aligned}
\vspace{-3ex}
\label{eq:opt}
\end{sequation}

\begin{figure}[!t]
  \centerline{\psfig{figure=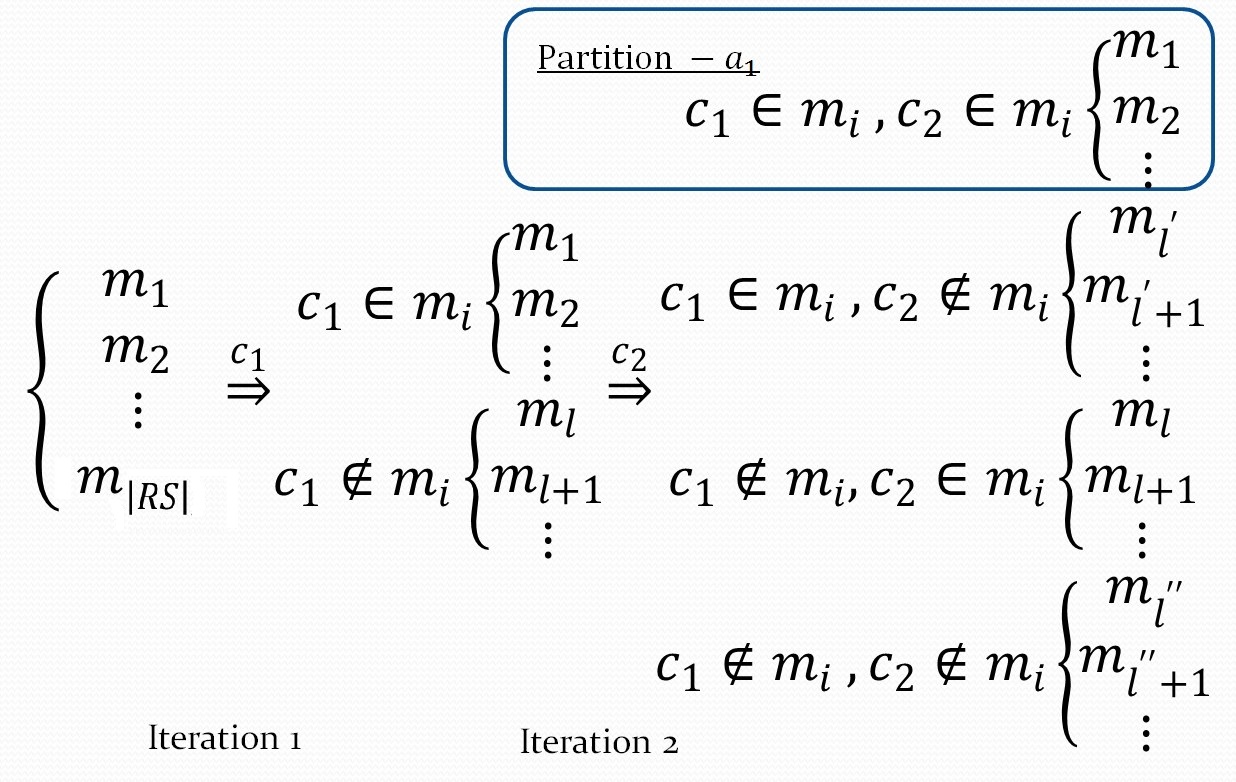,height=33mm} }\vspace{-3ex}
    \caption{\footnotesize{Illustration of R Partitioning}}\vspace{-4.5ex}
   \label{fig:partition}
\end{figure}

Eq~\ref{eq:opt} indicates that, at each iteration, we are searching the
most uncertain correspondence, given the correspondences selected in
previous iterations. In particular, after the $(k-1)^{th}$
iteration, the possible matchings are at most split into $2^{k-1}$
partitions, each of which corresponds to an element $a_i\in
D_{A^{(k-1)}}$. We aim to find the $k^{th}$ correspondence, in
order to further split them to at most $2^k$ partitions, such that
then entropy of resulting partitions is maximized.
Figure~\ref{fig:partition} illustrates a partitioning of the first two
iterations. Motivated with this interpretation, we propose to apply
an in-memory index to maintain the list of partitions for each
iteration. One can see that each partition corresponding to $a_i$ is
essentially a set of possible matchings. In addition, also index
$\mathbb{P}(a_i)$ associated with each partition.

As a result, the computation of $H(A_{c_k}|D_{A^{(k-1)}})$ for each
candidate correspondence is simply traversing the list of
partitions. Note the number of partitions is at most $|R|$ (i.e.
each partition has only one possible matching), so the overall
complexity is upper bounded by $\mathcal{O}(k|R||C|)$. However, there is
still room for the further pruning of the search space. In the
follows, we derive four pruning techniques to avoid traversing all
the partitions. Each pruning indicates a condition that guarantees
certain partitions are unnecessary to be considered, hence speed up
the overall computation. For simplicity, we just use the notation
$a_i$ to represent the partition corresponding to $a_i$. Then, for
the iteration, we have partitions ${a_1,a_2,...,a_n}$ with
probabilities ${\mathbb{P}(a_1),\mathbb{P}(a_2),...,\mathbb{P}(a_n)}$ respectively. As
follows, we present four pruning rules.\vspace{-1ex}
\begin{pruning rule}
\label{pruning_1} If a partition $a_i$ has only one matching, $a_i$
can be safely pruned, i.e. we can remove $a_i$ from the list of
partitions.\vspace{-1ex}
\end{pruning rule}
Pruning rule~\ref{pruning_1} utilizes the intuition that the
correctness of a possible matching $m$ can be fully determined by
the selected correspondences, when $m$ is the only one in its
partition. In other words, the remaining correspondences of $m$
would not contribute any more information, hence should not be
selected.\vspace{-1ex}
\begin{pruning rule}
\label{pruning_2}
Let $c$ be a candidate correspondence, then $c$ can be safely pruned (for the rest of the iterations), if all $a_i$, one of the following conditions are met for :(1) $\forall m_i\in a_i, c\in m_i$, (2)$\forall m_i\in a_i, c\notin m_i$\vspace{-1ex}
\end{pruning rule}

Similar to Pruning rule~\ref{pruning_1}, Pruning
rule~\ref{pruning_2} indicates the condition that the correctness of
$c$ can be determined by selected correspondences.

Next, we introduce Pruning Rule \ref{pruning_3} and
\ref{pruning_4}, which derives two non-trivial upper bounds, which
enable effective pruning.\vspace{-0.5ex}
\begin{pruning rule}
\label{pruning_3}
Let $H_{best\_so\_far}$ be the best value of \\Eq~\ref{eq:opt} so far for the current iteration, then for the correspondence $c$, let ${a_1,a_2,...,a_m}$ be the partitions $c$ already traversed Let\vspace{-1ex}
\begin{footnotesize}
\begin{align*}
H_{0}=-\sum\limits _{i=1}^{m}\mathbb{P}(a_{i})\left[\mathbb{P}(Y|a_{i})\log\mathbb{P}(Y|a_{i})\right.+\left.\mathbb{P}(N|a_{i})\log\mathbb{P}(N|a_{i})\right]
\vspace{-1ex}
\end{align*}
\end{footnotesize}Then $c$ can be pruned for the current iteration, if we have\vspace{-1ex}
\begin{footnotesize}
\[
H_{best\_so\_far} - H_0 \leq \sum\limits_{j=m+1}^{n}\mathbb{P}(a_j)\log\dfrac{\mathbb{P}(a_j)}{2}\vspace{-1ex}
\]\end{footnotesize}
\end{pruning rule}
\begin{proof}
For the rest of partitions ${a_{m+1},a_{m+2},...,a_n}$, the optimal situation is they are all perfectly bisected, that is $\forall i\in [m+1,n]$, $\mathbb{P}(Y|a_i) =\mathbb{P}(N|a_i)=0.5$. Therefore, their contribution to the optimization function has a upper bound\vspace{-1ex}
\begin{footnotesize}
\[
\sum\limits_{j=m+1}^{n}\mathbb{P}(a_j)\log\dfrac{\mathbb{P}(a_j)}{2}\vspace{-1.5ex}
\]\end{footnotesize}
\end{proof}
\begin{pruning rule}
\label{pruning_4}
Let $H_{best\_so\_far}$ be the best value of \\Eq~\ref{eq:opt} so far for the current iteration. For a correspondence $c$, let \\$H(A_{c}|D_{A^{(k-2)}})$ be the conditional entropy computed from a previous iteration. Then, $c$ can be pruned for the current iteration if\vspace{-1ex}
\begin{footnotesize}
\[
H\left(A_{c}\left|D_{A^{(k-2)}}\right.\right)\leq H_{best\_so\_far}\vspace{-1ex}
\]\end{footnotesize}
\end{pruning rule}
\begin{proof}
This pruning rule reflects the sub-modularity of the joint entropy. $S_Q^{k-2}$ is the set of CCQs selected in the previous iteration, so  $S_Q^{k-2}\subset S_Q^{k-1}$, where $S_Q^{k-1}$ is the CCQs selected for the current iteration. Then by sub-modularity, we have\vspace{-1ex}
\begin{footnotesize}
\[
H\left(D_{A^{(k-2)}},A_{c}\right)-H\left(D_{A^{(k-2)}}\right)\geq H\left(D_{A^{(k-1)}},A_{c}\right)-H\left(D_{A^{(k-1)}}\right)\vspace{-1ex}
\]\end{footnotesize}and equivalently, $H\left(A_{c_k}\left|D_{A^{(k-2)}}\right.\right)\geq H\left(A_{c_k}\left|D_{A^{(k-1)}}\right.\right)$, which completes the proof.\vspace{-1ex}
\end{proof}At last we use Theorem 4.2 to show a pruning rule.\vspace{-1ex}
\begin{pruning rule}
\label{pruning_5}
Given the selected correspondences $S_{Q}^{k-1}$ in previous (k-1)th iterations, two current potential selected correspondences $c_1$ and $c_2$, correspondence $c_2$ could be safely filtered if these two correspondences satisfy\vspace{-1ex}
\begin{footnotesize}
\begin{align*}
h^{(l)}(D_{A^{(k-1)}} \cup A_{c_1}) \geqslant h^{(u)}(D_{A^{(k-1)}} \cup A_{c_2})
\end{align*}\end{footnotesize}
\end{pruning rule}\vspace{-3.5ex}

\section{Experimental Results}
\vspace{-1ex}
\label{Experiment} We conducted extensive experiments to evaluate
our approaches, based on both simulation and real implementation. We
focus on evaluating two issues. First, we examine the effectiveness
of our two frameworks in reducing the uncertainty for possible
matchings. Second, we verify the correctness of our approaches, by
evaluating the precision and recall of the best matchings.
\vspace{-2.8ex}
\subsection{Experimental Setup}
\vspace{-1ex}
We adopt the schema matching tool OntoBuilder
\cite{DBLP:journals/vldb/GalATM05,DBLP:journals/jods/Gal06}, which
is one of the leading tools for schema matching. In particular, we
conduct our experiments on five datasets, each of which includes
five schemata. The schemata are extracted from web forms from
different domains. We describe the characteristics of each dataset
in Table~\ref{datasets}. By OntoBuilder, schemata are parsed
into xml schemata, and attributes refer to nodes with semantic
information. We conduct pairwise schema matching within each domain,
so there are totally 40 pairs of schemata (10 for each domain). In
OntoBuilder, four schema matching algorithms are implemented, namely
\textit{Term}, \textit{Value}, \textit{Composition} and
\textit{Precedence}. For each pair of schemata, we generate 400
unique possible matchings (100 for each algorithm). In addition,
each possible matching is associated with a global score, which
indicates the goodness of the matching. We obtain the probabilities
of matchings by normalizing the global scores. The details of these
algorithms can be found in \cite{DBLP:journals/jods/Gal06}.\vspace{-2.5ex}
\subsection{Simulation}
\vspace{-1ex}
\label{simulation}
\begin{figure}[t]
\vspace{-2ex}
\includegraphics[width=8.5cm]
 {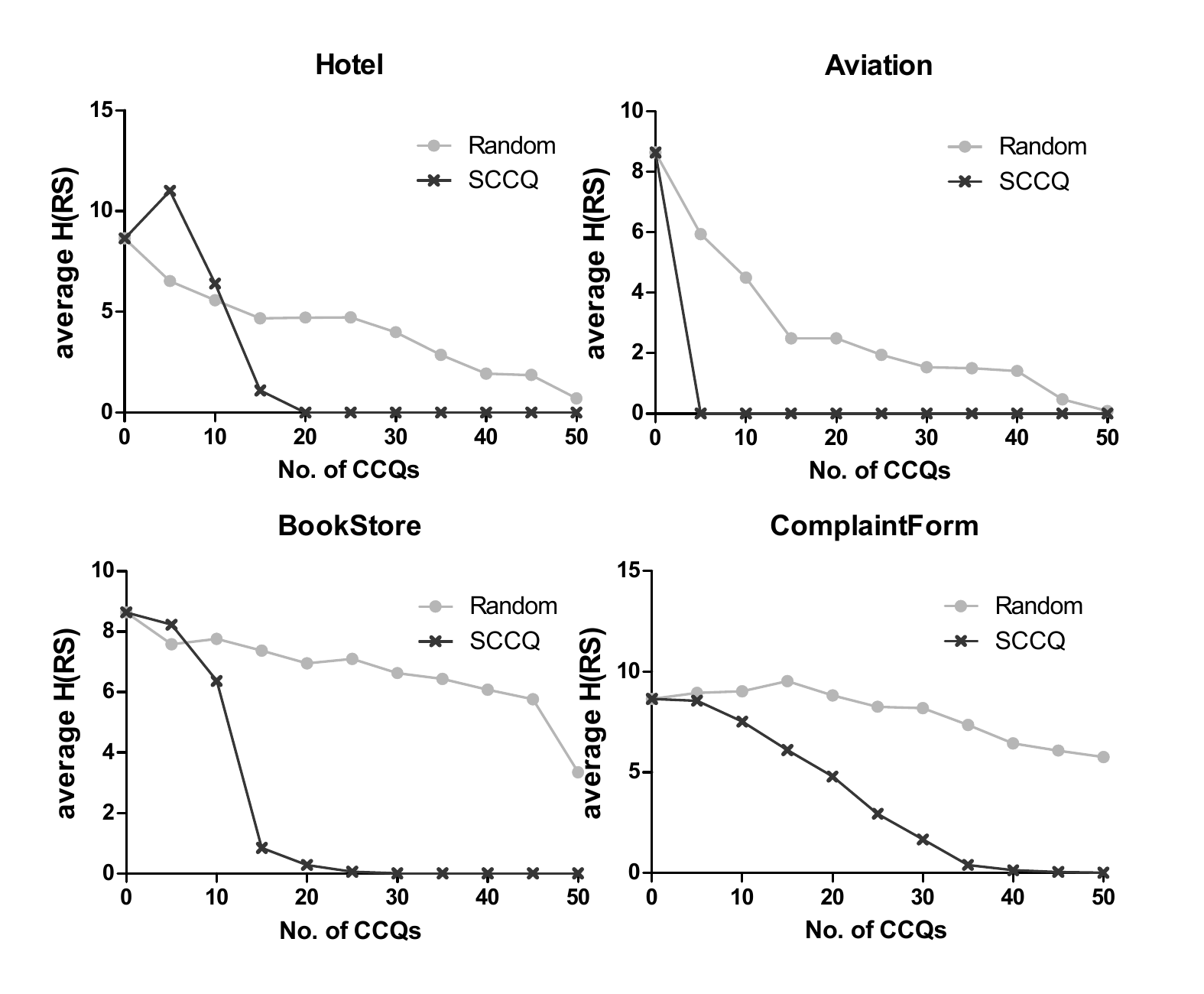}\vspace{-4.5ex}
    \caption{\footnotesize{Single CCQ v.s. Random - Simulation}}\vspace{-4.5ex}
    \label{fig:SCCQ}
\end{figure}
To evaluate the effectiveness of our two approaches, we first
conduct a simulation of the crowd's behaviour, based on our
formulation in Section~\ref{section:formalization}. First, we
manually select the best matching from the 400 possible matchings,
and treat the selected matching as the correct matching (i.e. ground
truth). So for any correspondence, its correctness depends on
whether it is in the selected matching. Second, for each published
CCQ, we randomly generate an accuracy rate $P_W\in [0.5,1]$ following
an uniform distribution. Third, given a CCQ, we generate the correct
yes-no answer with probability $P_W$ (i.e. generate the wrong
answer with probability $1-P_W$, and then return the answer and $P_W$
as the inputs for adjustment (Section 3.1).

First, we present the effectiveness of Single CCQ approach
( Framework~\ref{SingleCCQ}), by comparing its performance with
randomly selecting CCQs. We set the budget $B=50$, and each CCQ is
generated after receiving the answer of the previous one.
Figure~\ref{fig:SCCQ} illustrates the average change of uncertainty
(vertical axis) with the number of answers of CCQs received
(horizontal axis). With the increase of number of CCQs, the
uncertainty converges to zero rapidly. From the experimental
results, our proposed Single CCQ approach (SCCQ) outperforms the
random approach (Random) significantly. Please note that all the results plotted in Section~\ref{simulation} and \ref{AMT} are averages over 10 runs. The distribution is quite dense within each domain, but diverse for different domains.
\nop{GOOD TO SAY HERE THAT ALL RESULTS PLOTTED ARE AVERAGES OVER 40?? RUNS.
AND ONE SENTENCE ABOUT WHAT THE DISTRIBUTION LOOKS LIKE AROUND THE AVERAGE.}

Next, we examine the performance of Multiple CCQ
(Framework~\ref{Multiple CCQ}). Recall that we need to constantly
monitor the CCQs, and update the CCQs whenever new answers are
received. In the simulation of conference paper \cite{DBLP:pvldb/ZhangCJC13}, we check the states of published CCQs every time unit. Each published CCQ is initially at state ``waiting".
For each time unit, each CCQ in state ``waiting" may change to
``accepted'' with probability $P_0$ (remain unchanged with
probability $1-P_0$), where $P_0$ is a random number generated from
$(0,0.5)$; and each CCQ at state ``accepted'' may change to
``answered'' with probability $P_1$ (remain unchanged with
probability $1-P_1$), where $P_1$ follows a Poisson distribution. Figure~\ref{fig:MCCQ}
illustrates the performance of Multiple CCQ by varying k, where we
set the budget $B=50$. Recall that k, a parameter of
Framework~\ref{Multiple CCQ}, represents the number of CCQ in the
crowd. Whenever a CCQ is answered, we dynamically updated the k
CCQs, to make sure the k CCQs are the best according to the all
received answers. In particular, when k=1, Framework~\ref{Multiple
CCQ} becomes the Single CCQ approach. One can observe that the
curves with smaller k tend to have better performance in terms of
reducing uncertainty. In fact, the larger k is, the less advantage MCCQ has comparing to a random selection. Recall each time we select k out of $|C|$ correspondences, and when $k=|C|$, MCCQ is the same as random selection, i.e. select all of the correspondences we have.

As discussed in Section~\ref{MCCQA}, the increase of $k$ leads to less uncertainty reduction (which is consistent with the result in Figure~\ref{fig:MCCQ}), but improves the overall time efficiency. Since there are
multiple uncontrollable factors affecting the completion time of
workers, the time cost of the proposed approaches are hard to be
simulated. Nevertheless, we analyse the relation between $k$ and the time cost in the real-world implementation in Section~\ref{AMT}.\vspace{-2.5ex}
\begin{table}[!t]
\caption{\footnotesize{DATASETS}}\vspace{-2.5ex} 
\label{datasets}
\centering 
\footnotesize
\begin{tabular}{c c c } 

\hline 
\small{Notation} & \small{Source} & \small{No.of attributes}\\ 
Hotel & hotel searching websites & 14-20  \\
Aviation & homepages of airline companies & 12-18\\
BookStore & the webpages of advanced \\
&search in online book stores & 13-21\\
ComplaintForm & the complaint forms of \\
& government websites & 27-34\\
News & news websites & 43-60\\

\end{tabular} 
\vspace{-3.5ex}
\end{table}

\begin{figure}[!t]
    \includegraphics[width=8cm]{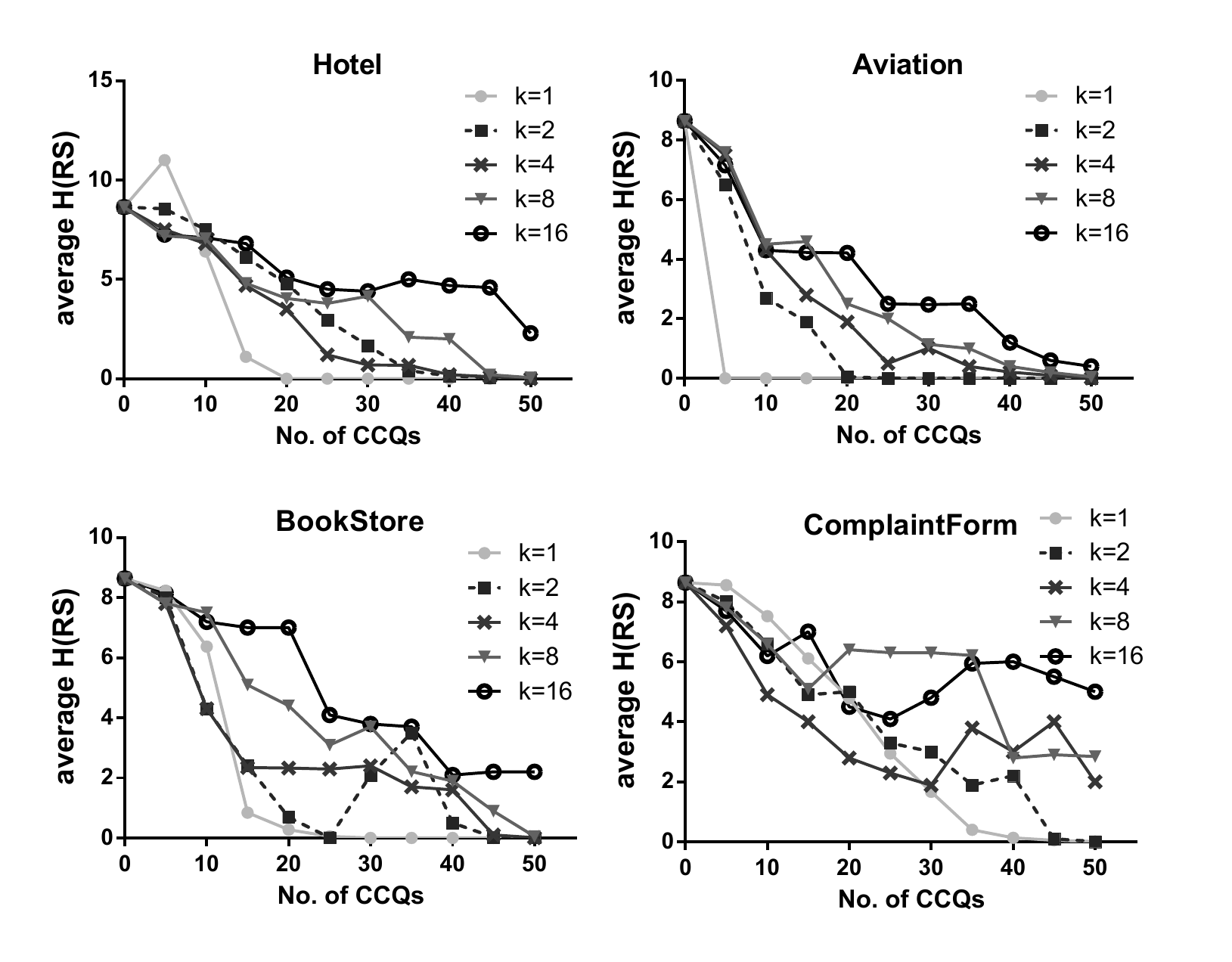}
       \vspace{-5ex}
       \caption{\footnotesize{Multiple CCQ with different k - Simulation}}
       \vspace{-4ex}
    \label{fig:MCCQ}
\end{figure}
\vspace{-1ex}
\subsection{Testing on Amazon Mechanical Turk}
\vspace{-1ex}
\label{AMT} We implement our two approaches on Amazon Mechanical
Turk (AMT), which is a widely used crowdsourcing marketplace.
Empowered with the Amazon Mechanical Turk SDK, we are able
to interactively publish and manage the CCQs. Each HIT of AMT
includes all the attributes of two schemata, one CCQ, and the URLs
of the source web-pages. Each HIT is priced US\$0.05. One can see
that each HIT is essentially a CCQ. For the rest of this section,
the terms ``HIT'' and ``CCQ'' are exchangeable.

\nop{In addition, the
states of published CCQs are checked every 5 seconds. In fact, we would like to check the states as frequently as possible. However, too frequent access may lead to unstable networking connection. The connection became stable when this waiting time is set to at least 5 seconds.}

\nop{Concerning the error rate of the workers in AMT, a qualification
test including 10 CCQs is required for each worker. Then the  error
rate is computed as the fraction of CCQs answered incorrectly in the
qualification test. }

\nop{
Please note that workers are not paid for taking the qualification test, and workers with error rate lower than $40\%$ are eligible for accepting our HITs. In addition, as a fact, workers who correctly
answered all the CCQs in the qualification test may still provide
incorrect answers for further CCQs. Therefore, in case all the
CCQs are answered correctly, the error rate of the worker is set to $0.01$. Besides, for each schema matching problem, a worker is only allowed
to answer one CCQ. This is because we assume the crowdsourced
answers are independent for each schema matching problem. Of course,
workers are allowed answer CCQs generated for different schema
matching problems. Throughout the whole experiment in AMT, there are totally 412 distinct workers, with average error rate $28\%$.}

In analogy to the simulation, Figure~\ref{fig:SCCQReal} and
Figure~\ref{fig:MCCQReal} illustrate the performances of Single CCQ
and Multiple CCQ respectively, where we set the budget $B=50$. In
terms of uncertainty reduction, one can see that the performance
is basically consistent with the simulation. A very important
finding is that, in contrast with the simulation, the
uncertainty is likely to increase when the first several CCQs are
answered. The increase can happen when a surprising answer is
obtained, i.e. a yes answer is returned for low-probability
correspondence, or vice versa. This phenomenon indicates that, the
budget should be large enough to achieve satisfactory reduction of
uncertainty.

Another important finding is that, the uncertainty convergence to
zero in real implementation is much slower than that in the
simulation. A possible reason is that, we use a Bernoulli
distribution to model the error rate of workers. But in reality, the
error rate follows a much complex distribution, which may be related
to the dataset.

Lastly, we present the overall time cost of Single CCQ and Multiple
CCQ approaches in the real implementations, where totally 50 CCQs
are published and answered. As shown in Figure~\ref{fig:time}, the curves with larger $k$ tend to have less time cost. Please note that, the
case of Single CCQ is indicated with $k=1$. When k is increased, we get faster initial reduction on uncertainty, but the overall reduction tend to be limited. Actually, there are many
uncontrollable factors would affect the completion time, such as the
difficulty of the CCQs, the time of publication etc.\vspace{-3ex}
\nop{For instance,
it is observed that, the time cost for dataset ``ComplaintForm" is
higher than the other three. This may be because the number of attributes
in ``ComplaintForm" is higher, as shown in Table~\ref{datasets}, so the schemata are more complex for workers to understand.}
\begin{figure}[t]
    \includegraphics[width=8.5cm]{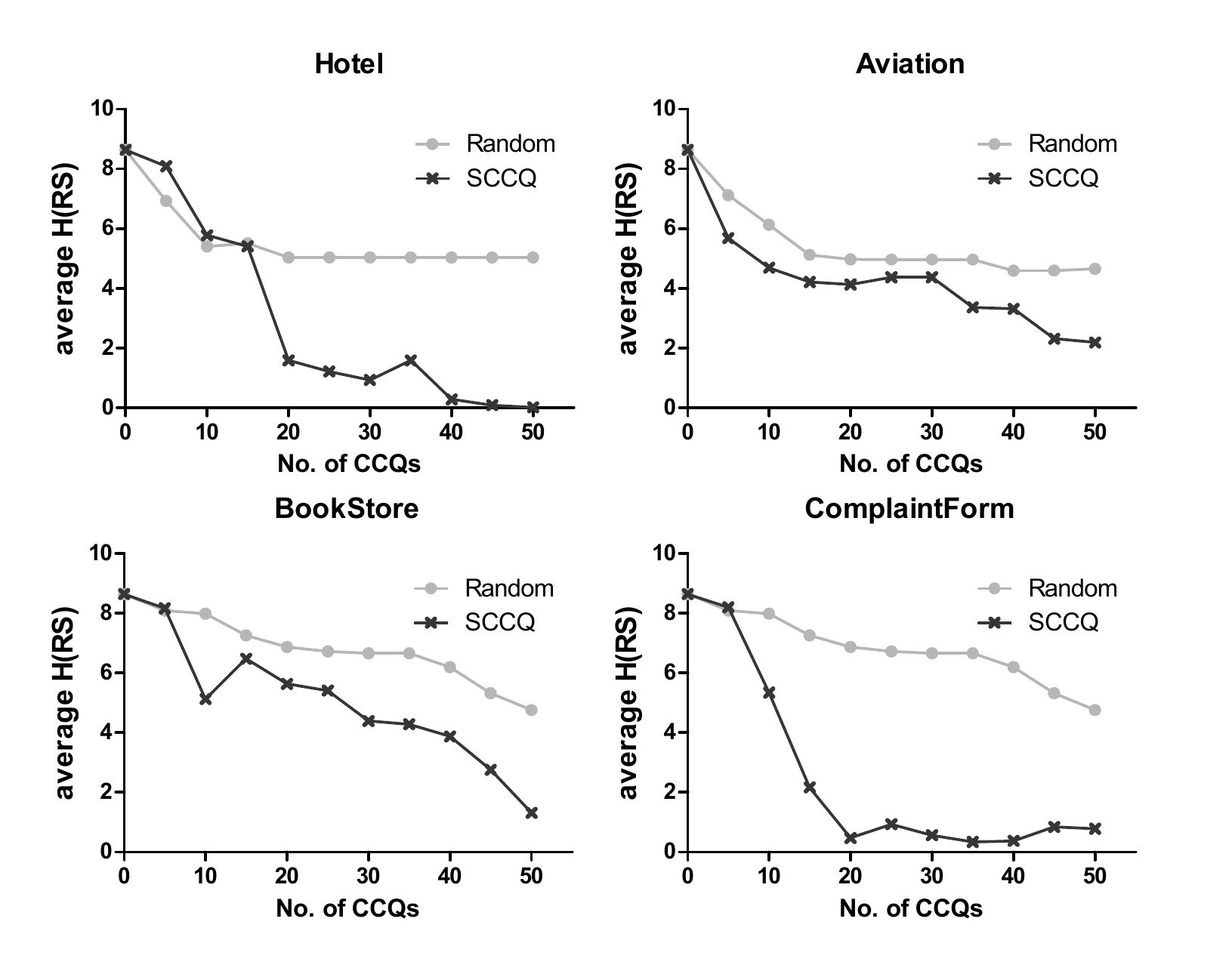}
      \vspace{-4.5ex}
    \caption{\footnotesize{Single CCQ v.s. Random - on Amazon Mechanical Turk}}
    \label{fig:SCCQReal}
   \vspace{-3ex}
\end{figure}
\begin{figure}[t]
    \includegraphics[width=8.5cm]{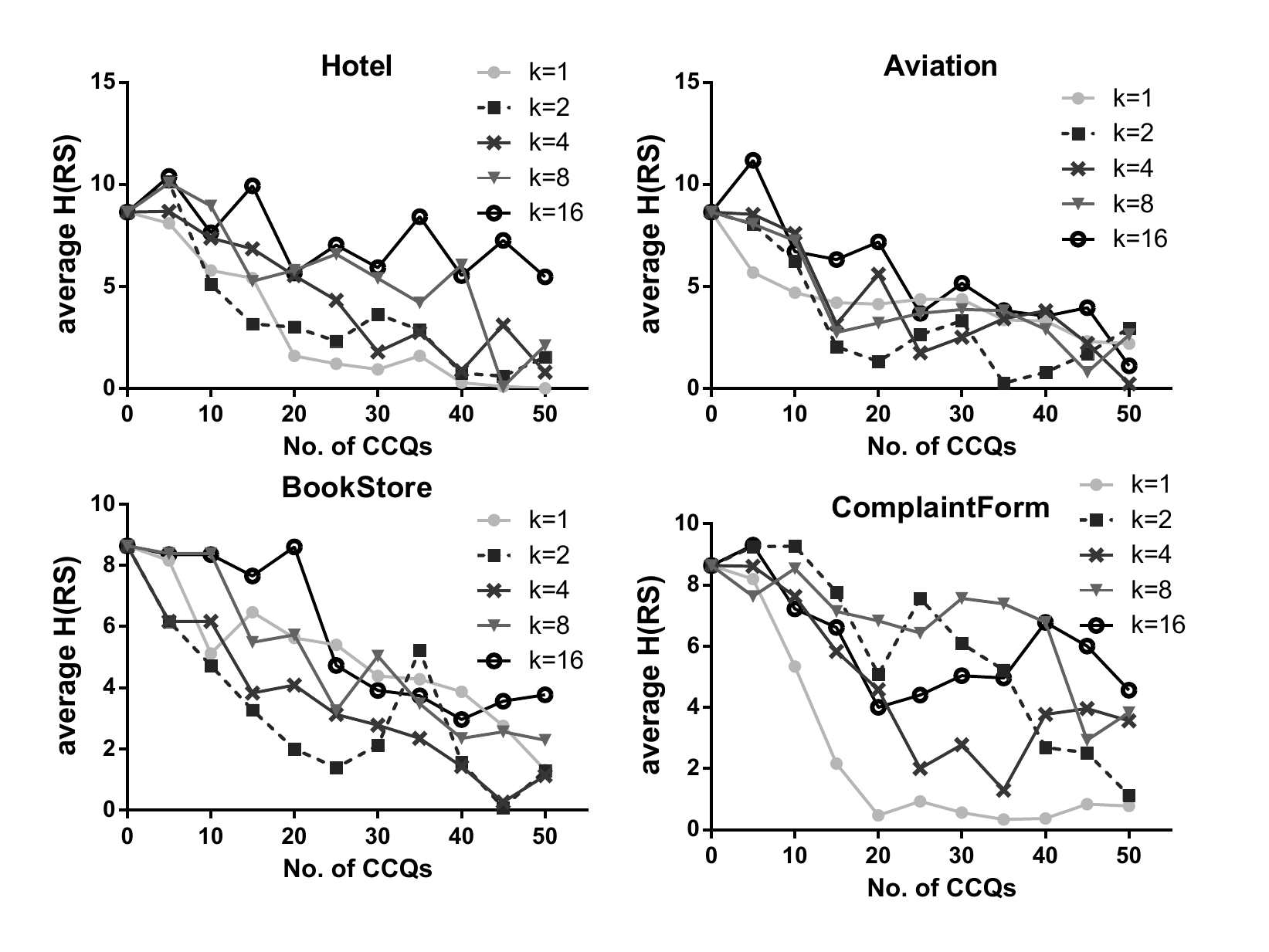}
   \vspace{-4.5ex}
    \caption{\footnotesize{Multiple CCQ with different k on Amazon Mechanical Turk}}
   \vspace{-4.2ex}
  \label{fig:MCCQReal}
\end{figure}
\begin{figure}[t]
    \includegraphics[width=8.5cm]{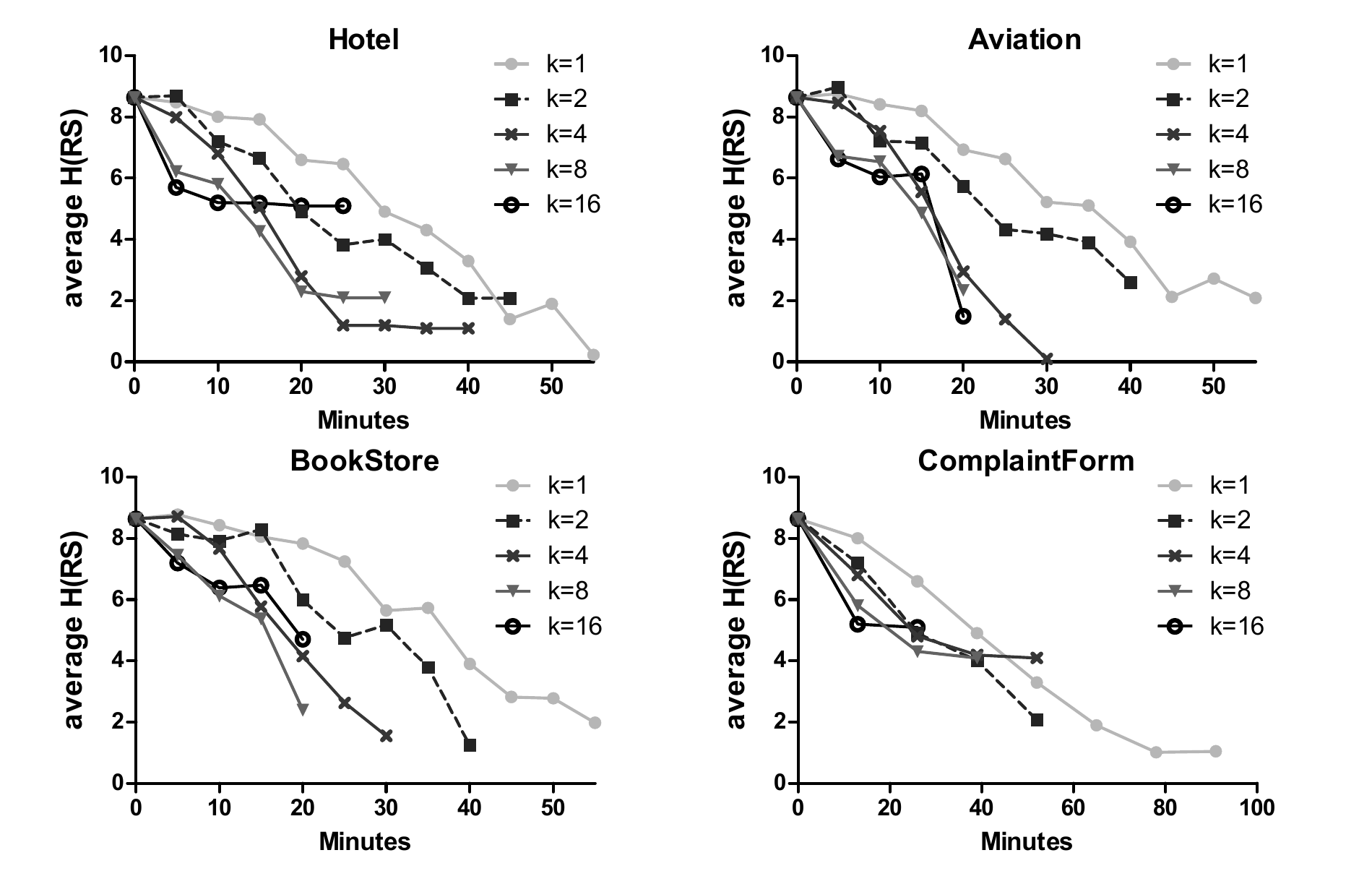}
      \vspace{-3.5ex}
    \caption{\footnotesize{Time Cost with different k on Amazon Mechanical Turk}}
   \vspace{-3.3ex}
    \label{fig:time}
\end{figure}
\subsection{Data Quality}
\vspace{-1ex}
In this subsection, we verify the correctness of our approaches, by
evaluating the precision and recall of the best matching, i.e. the
possible matching with the highest possibility after the uncertainty
reduction. Precision is computed as the ratio of correct
correspondences out of the total number of correspondences in the
correct matching (ground truth). Recall is computed as the ratio of
correct correspondences out of the total number of correspondences
in the correct matching. Since the performances are very similar on
different datasets, we merge the four datasets into one, and present
the precision and recall averaged from 40 runs.

Figure~\ref{fig:precision_recall} illustrates the quality of the best
matching after uncertainty reduction with budget $B=50$. The
suffixes ``\_S'' and ``\_R'' represent the data obtained from the
simulation and the real-world implementation on AMT, respectively. B mainly depend on how much money the HIT requester will pay for the task. In the simulation, the precision and recall are almost $100\%$. In
the real-world implementation, 50 questions by SCCQ make precision
and recall over $90\%$, which are significantly better than that of the ``machine-only'' methods when
k is small. However, in the real implementation, we find that when
$k$ is increased, the precision and recall tend to be decreased
dramatically. In particular, for cases $k=8$ and $k=16$, the MCCQ is
only slightly better than the \textit{Composition}. The reason is
twofold: first, comparing to SCCQ, there is averagely less
information for selecting CCQs in MCCQ; second, due to the
NP-hardness, we are only able to select CCQs that are near-optimal.

Recall that the motivation of MCCQ is to improve the time
efficiency. Therefore, we conducted another set of experiments where
time is the constraint, in order to investigate the relation between
$k$ and data quality. Explicitly, we preform SCCQ and MCCQ for $50$
minutes, without any limit on the budget. The precision and recall
are demonstrated in Fig~\ref{time_pr}. From the experimental
results, we conclude that \textbf{the MCCQ with large $k$ has outstanding
performance for time-constrained situations}.
Therefore, we conclude that \textbf{$k$
should be set to a small value when the budget is the main
constraint; whereas a large value is suggested for $k$ if
time-efficiency is the primary constraint}.\vspace{-3ex}
\begin{figure}[t]
    \hspace{-2ex}
    \includegraphics[width=8.5cm] {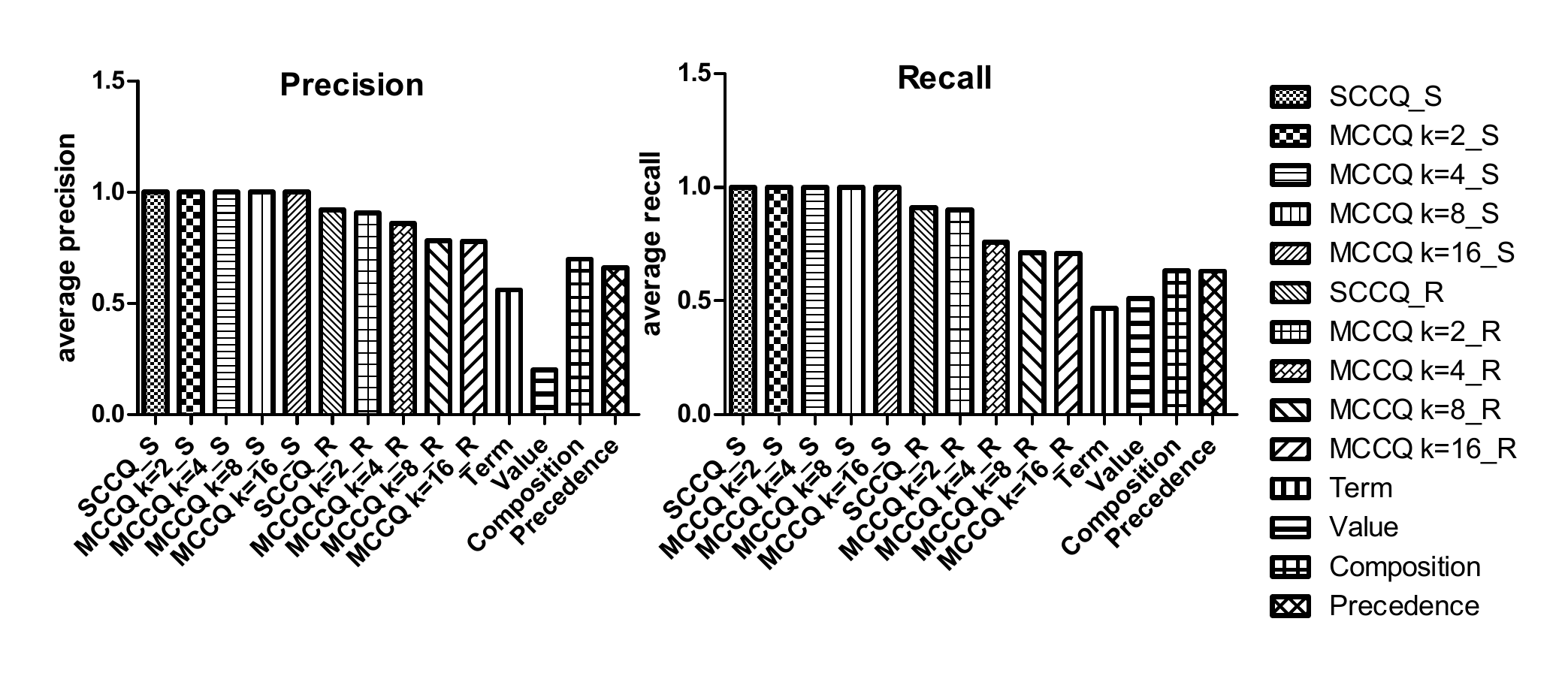}
\vspace{-4.5ex}

    \caption{\footnotesize{Data Quality with Budget Constraint- Precision \& Recall}}
\vspace{-4.5ex}
    \label{fig:precision_recall}
\end{figure}
 \begin{figure}[t]
    \includegraphics[width=8.5cm]{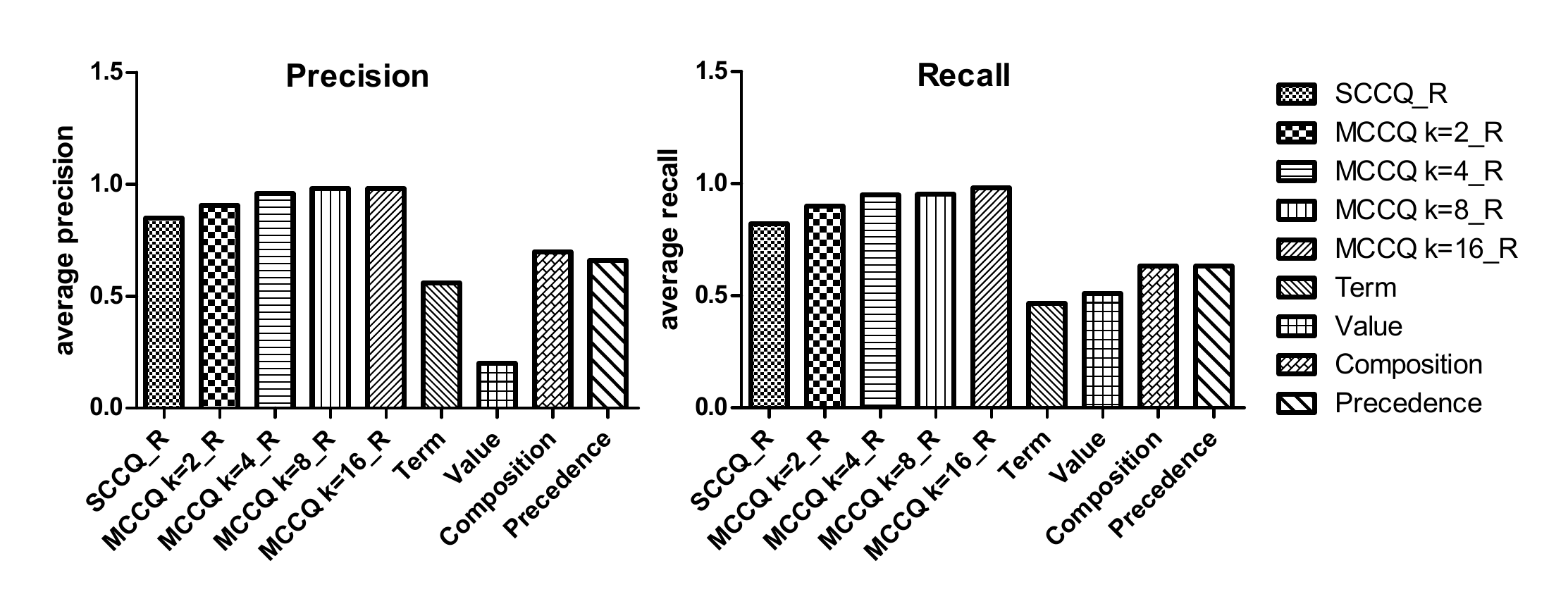}

  \vspace{-4.5ex}
    \caption{\footnotesize{Data Quality with Time Constraint - Precision \& Recall}}
    \vspace{-7ex}
    \label{fig:precision_recall_time}
     \label{time_pr}
\end{figure}

\subsection{New Experiments}
\vspace{-1ex}
With a more realistic model in this paper, we conduct experiments of MCCQ again. In simulation, firstly we randomly generate accuracy rates following uniform distribution on $[0.5,1]$ for all correspondences as their hardness attribute. We publish k initial CCQs with state ``waiting''. We still check the states of published CCQs every time unit.
For each time unit, each CCQ in state ``waiting'' may change to
``accepted'' with probability $P_0$ and each CCQ at state ``accepted'' may change to
``answered'' with probability $P_1$. Each answer is returned with an accuracy rate $P_{W_i}$ as the trustworthiness of the crowd. Accuracy rates $P_{W_i}$ also follows uniform distribution on $[0.5,1]$. We still set budget $B=50$ and    Figure~\ref{fig:MCCQ1} shows the performance of Multiple CCQ by varying k. Then in Figure~\ref{fig:MCCQReal2} we apply our MCCQ approach on Amazon Mechanical Turk. The difference between new experiments and the old ones in \cite{DBLP:pvldb/ZhangCJC13} is that we consider initial accuracy rates and different accuracy rates in each step. In \cite{DBLP:pvldb/ZhangCJC13}, assumption of theoretical results is that accuracy rates equal to 1, while in experiments we chose accuracy rates less than 1. Moreover, in this paper we obtain optimal upper bound and lower bound for entropy reduction, so that pruning rules are more efficient. These are major reasons that our new choices for CCQs are  comparatively better in terms of entropy reduction with less fluctuation.

At last we consider a new dataset with more attributes and we set $B=70$, $k=8$. Let $X$ be beta distribution $Beta(2,2)$. In Figure 12, we try different distributions for $P_{W_i}$. Line 1 shows $P_{W_i}$ follows uniform distribution on $[0.5,1]$ with mean 0.75 and variance $1/48$. In Line 2, $P_{W_i}=0.5X+0.5$, thus $P_{W_i}\in [0.5,1]$ with mean 0.75 and variance $1/80$. In Line 3, $P_{W_i}=0.4X+0.6$, thus $P_{W_i} \in [0.6, 1]$  with mean 0.8 and variance $1/125$. In line 4,  $P_{W_i}=0.6X+0.4$, thus $P_{W_i} \in [0.4, 1]$  with mean 0.7 and variance $9/500$. Line 5  shows the result in AMT. Comparing first four lines, we can see Line 3 perform best as $P_{W_i}$ has biggest mean and smallest variance. Line 4 perform worst since in practice we do not choose a crowd worse than 0.5.  \vspace{-2.5ex}
\begin{figure}[!t]
    \includegraphics[width=8.5cm]{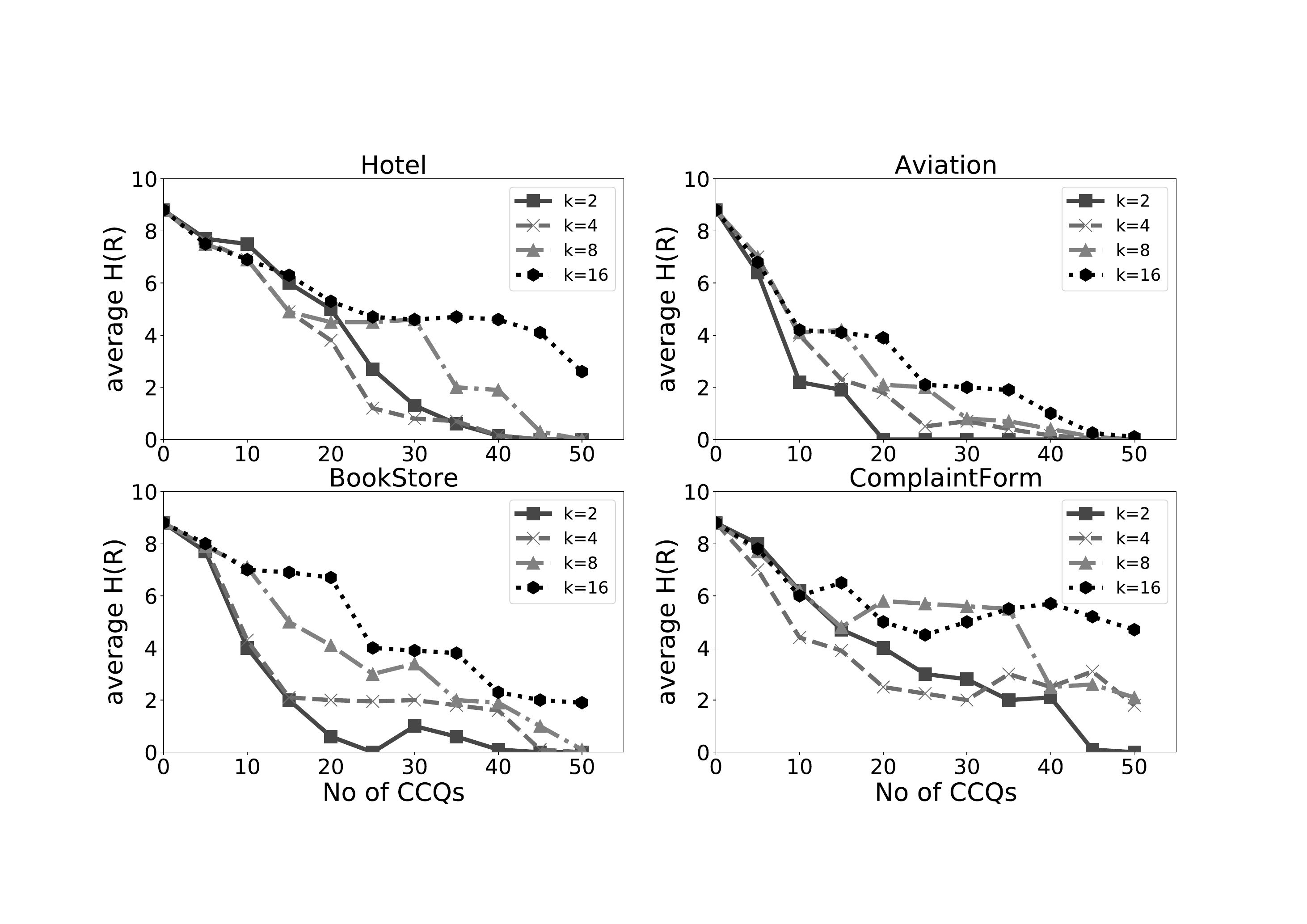}
       \vspace{-7ex}
       \caption{\footnotesize{Multiple CCQ with different k - Simulation(New)}}
       \vspace{-4ex}
    \label{fig:MCCQ1}
\end{figure}
\begin{figure}[t]
    \vspace{-6ex}
    \includegraphics[width=8.5cm]{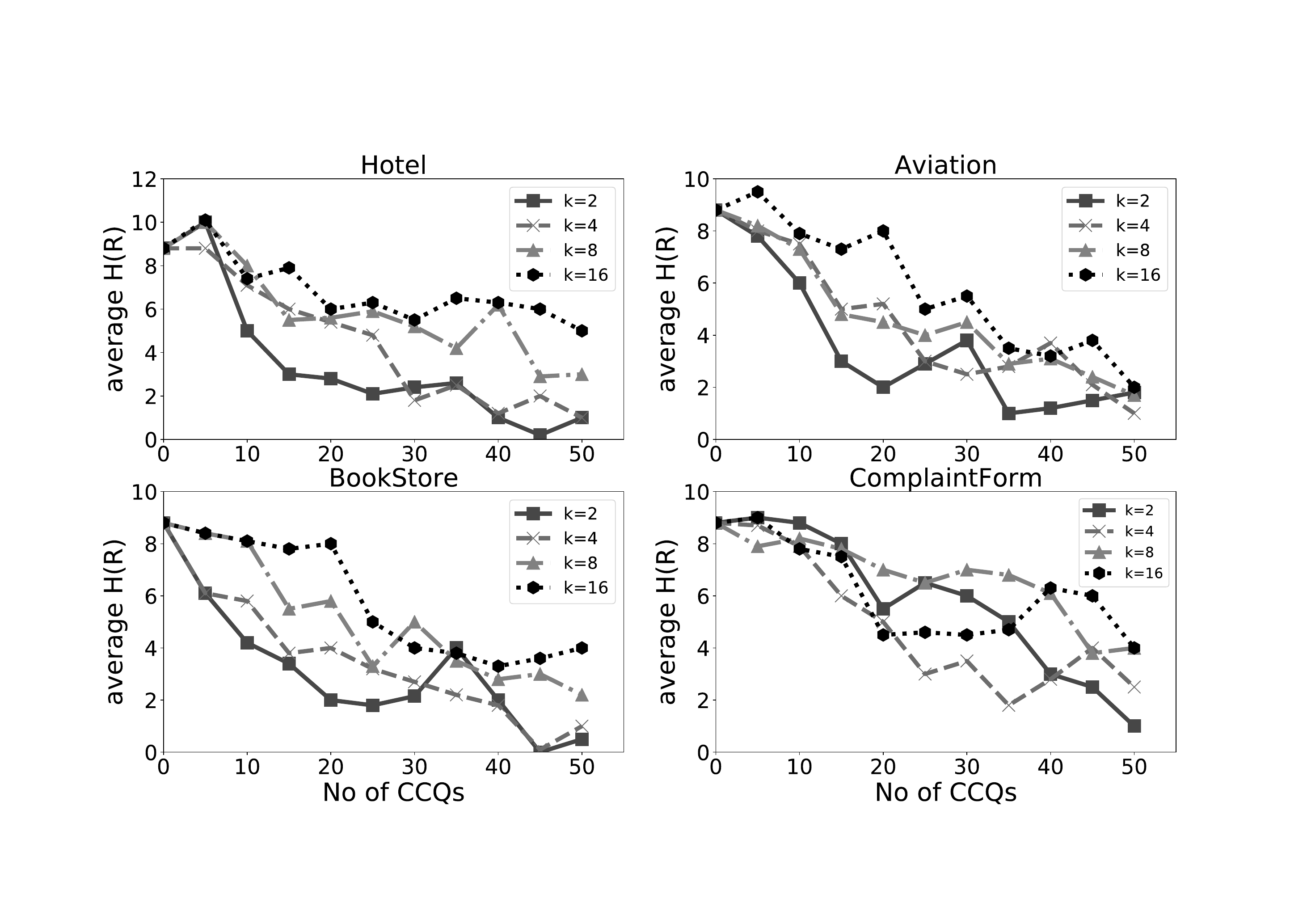}
    \vspace{-7.3ex}
    \caption{\footnotesize{Multiple CCQ with different k on Amazon Mechanical Turk(New)}}
    \vspace{-5ex}
  \label{fig:MCCQReal2}
\end{figure}
 \begin{figure}[t]
    \includegraphics[width=5cm]{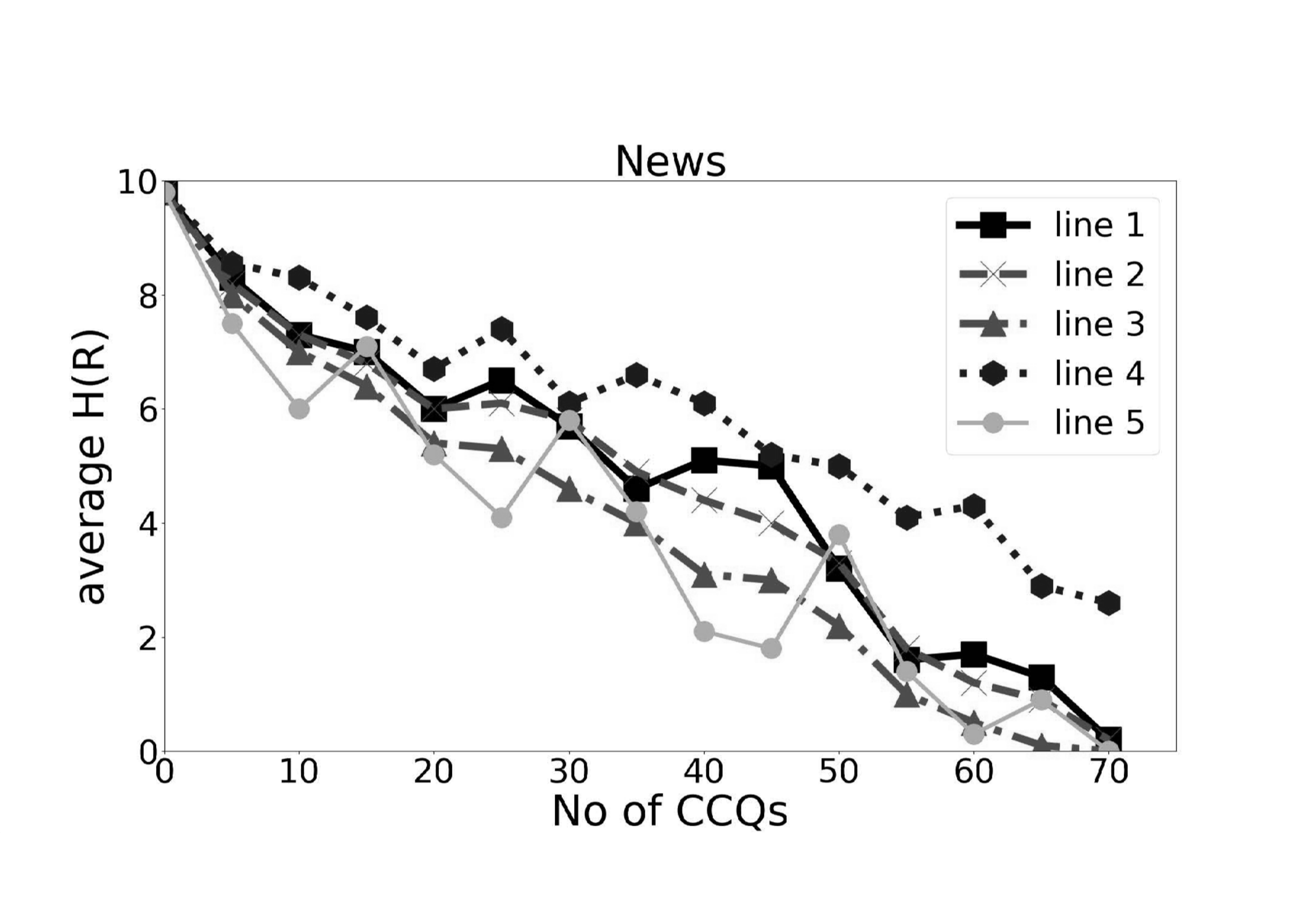}

  \vspace{-4.5ex}
    \caption{\footnotesize{MCCQ with different $P_{W_i}$}}
    \vspace{-4.5ex}
\end{figure}

%

\section{Related Work}
\vspace{-0.5ex}
\label{related}
\subsection{Uncertainty in Schema Matching}
\vspace{-0.5ex}
The model of possible matching, namely ``probabilistic schema mappings", was first introduced in
\cite{Dong:2007:DIU:1325851.1325930}. In their work, algorithmic
approaches generate a set of matchings between two schemata, with a probability attached to
each matching. After the collection of possible matchings is
determined, the probability of each correspondence can be computed
by summing up the probabilities of possible matchings in which the
correspondence is included.  Later, Sarma et
al. \cite{DBLP:conf/sigmod/SarmaDH08} used well-known schema
matching tools (COMA, AMC, CLIO, Rondo, etc.) to generate a set of
correspondences associated with confidence values between two
schemata. Then, the possible matchings are constructed from these
correspondences and data instances. A more
intuitive method of constructing possible matchings is proposed in \cite{DBLP:series/synthesis/2011Gal}. In detail, \cite{DBLP:series/synthesis/2011Gal}
generates top-k schema matchings by combining the matching results
generated by various matchers, and each of the k matchings is
associated with a global score. Then possible matchings are
constructed by normalizing the global scores. Additionally, the
model of possible matchings has been adopted in \cite{Gal:4812467}
as a core foundation for answering queries in a data integration
system with uncertainty.
Gal \cite{DBLP:journals/jods/Gal06} used the top-K schema mappings from a semi-automatic matchers to improve the quality of the top mapping. \cite{Dong:2007:DIU:1325851.1325930} \cite{Gal:4812467} and \cite{Qi:2007:FFE:1247480.1247499} were devoted to the parallel use of uncertain schema matchings, and proposed new semantics of queries.

The uncertainty in schema matching has been intensively studied, primarily focusing on the query processing in the presence of uncertainty. X.Dong et al. \cite{Dong:2007:DIU:1325851.1325930} concentrated on the semantics and properties of probabilistic schema mappings. We assume that a set of probabilistic schema matchings is provided by an existing algorithm, such as one of those mentioned above. How to efficiently process uncertain data is an orthogonal issue, which has been well addressed, such as \cite{DBLP:journals/pvldb/TongCCY12, DBLP:conf/icde/TongCD12,DBLP:conf/sigmod/HuangAKO09}.

A  probabilistic matching network model was established in \cite{nguyen2014pay} to reduce uncertainty of schema matching. Authors developed pay-as-you-go reconciliation approach. Probabilities of correspondences are defined in their model independently of schema matching tools. \cite{sagi2013schema} discussed schema matching prediction which is an assessment mechanism to support schema matchers in the absence of an exact match.
\vspace{-3ex}

\subsection{Crowdsourcing and Data Integration}
\vspace{-1ex}
Such as schema matching, some queries cannot be answered by machines only. The recent booming up of crowdsourcing brings us a new opportunity to engage human intelligence into the process of answering such queries (see \cite{DBLP:journals/cacm/DoanRH11} \cite{li2016crowdsourcedsurvey} \cite{chittilappilly2016survey} as survey for crowdsourcing). In general, \cite{DBLP:conf/sigmod/FranklinKKRX11} proposed a query processing system using microtask-based crowdsourcing to answer queries. Many classical queries are studied in the context of crowdsourced database, including max \cite{DBLP:conf/sigmod/GuoPG12}, filtering \cite{DBLP:conf/sigmod/ParameswaranGPPRW12}, sorting \cite{DBLP:pvldb/MarcusWKMM11} etc. In \cite{DBLP:conf/cidr/ParameswaranP11}, a declarative query model is proposed to cooperate with standard relational database operators. In \cite{ciceri2016crowdsourcing}, crowdsourcing is used for top-K query processing over uncertain data. As a typical application related to data integration, \cite{ERCrwodsourcing} utilized a hybrid human-machine approach on the problem of entity resolution. \cite{meng2017knowledge} studied knowledge base semantic integration using crowdsourcing.

\cite{mccann2008matching} engages crowdsourcing into schema matching. In particular, \cite{mccann2008matching} proposed to enlist the multitude of users in the community to help match the schemata in a Web 2.0 fashion. The difference between our work and \cite{mccann2008matching} is threefold: (1) From the conceptual level, ``crowd'' in \cite{mccann2008matching} refers to an on-line community (e.g. a social network group); while we explicitly consider the crowd as crowdsourcing platforms (e.g. Mechanical Turk). (2) The essential output of \cite{mccann2008matching} is determined by the ``system builders'', which means the end users still have to get involved in the process of schema matching. (3) We focus on the optimization between the cost (the number of CCQs) and performance (uncertainty reduction).\vspace{-2.5ex}
\subsection{Active Learning}
\vspace{-1ex}
Active learning is a form of supervised machine learning, in which a
learning algorithm is able to interact with the workers (or some
other information source) to obtain the desired outputs at new data
points. A widely used technical report is
\cite{DBLP:series/synthesis/2012Settles}. In particular,
\cite{DBLP:journals/corr/abs-1209-3686,DBLP:conf/aaai/ZhaoSS11}
proposed active learning methods specially designed for
crowd-sourced databases. Our work is essentially different from
active learning in two perspectives: (1) the role of workers in
active learning is to improve the learning algorithm (e.g. a
classifier); in this paper, the involvement of workers is to reduce the uncertainty of given matchings. (2) The uncertainty of answers are usually assumed
to be given before generating any questions; in this paper, the
uncertainty of answers has to be considered after the answers are
received, since we cannot anticipate which workers would answer our
questions. To our best knowledge, there is no algorithm in the field of active learning can be trivially applied to our problem.\vspace{-2.5ex}

\section{Conclusion and future work}
\vspace{-0.5ex}
\label{conclusion}
In this paper, we propose two novel approaches,
namely Single CCQ and Multiple CCQ, to apply crowdsourcing to reduce
the uncertainty of schema matching generated by semi-automatic
schema matching tools. These two approaches adaptively select and
publish the optimal set of questions based on new received answers.
Technically, we significantly reduce the complexity of CCQ selection
by proving that the expectation of uncertainty reduction caused by a
set of CCQs are mathematically equivalent to the join entropy of answers minus entropy of crowds. In addition, we obtain optimal bounds for uncertainty reduction, prove NP-hardness of
MCCQ Selection, and design an $(1+\epsilon)$ approximation
algorithm, based on its sub-modular nature. One challenge we overcome is to investigate difficulties of CCQs and trustworthiness of crowd-sourced answers by accuracy rates of crowds.

Uncertainty is inherited in many components in modern data
integration systems, such as entity resolution, schema matching,
truth discovery, name disambiguation etc. We believe that embracing
crowdsourcing as a component of a data integration system would be
extremely conductive for the reduction of uncertainty, hence
effectively improve the overall performance. Our work represents an
initial solution towards automating uncertainty reduction of schema
matching with crowdsourcing.

A future work regarding to MCCQ is that: in Theorem 4.1, we distribute $k$ CCQs to crowds each time. We obtain a formula of uncertain reduction under the assumption that we take back $k$ answers. In reality, we do not know how many CCQs can be
answered. We may withdraw or replace some CCQs after a waiting time.  The choice of next $k$ CCQs is best only when all $k$ CCQs are answered. Therefore  investigating a more realistic and complete model with answer rates(a difficult CCQ may has a probability that no one accept it)  may further help reducing the matching uncertainty.


%

\ifCLASSOPTIONcompsoc
  \section*{Acknowledgments}
\else
  \section*{Acknowledgment}
\fi

\ifCLASSOPTIONcaptionsoff
  \newpage
\fi



%
{
\begin{footnotesize}
\bibliographystyle{abbrv}
\bibliography{IEEEexample}
\end{footnotesize}
}
\footnotesize \textbf{Chen Jason Zhang} received the PhD degree from the Department of Computer Science and Engineering(CSE) at the Hong Kong University of Science and Technology(HKUST) in 2015. He is currently associate professor in Shandong University of Finance and Economics. His research interests include crowdsourcing and data integration.

\textbf{Lei Chen} received the PhD degree in Computer Science from the University of Waterloo, Canada, in 2005. He is currently a Professor in department of CSE, HKUST. His research interests include crowdsourcing, uncertain databases and data integration.

\textbf{H. V. Jagadish} is currently the Bernard A Galler Collegiate Professor of Electrical Engineering and Computer Science at the University of Michigan. He received his Ph.D. from Stanford University in 1985. His research interests include databases and Big Data.

\textbf{Mengchen Zhang} received his Ph.D. degree from department of Mathematics at HKUST in 2017. He is currently a research assistant in department of CSE, HKUST. His research interests include Stein's method in probability , crowdsourcing and data integration.

\textbf{Yongxin Tong} received the Ph.D. degree in department of CSE, HKUST in 2014. He is currently an associate professor in the School of Computer Science and Engineering, Beihang University. His research interests include crowdsourcing, uncertain data mining and social network analysis.

 \begin{figure}[t]
    \includegraphics[width=17cm]{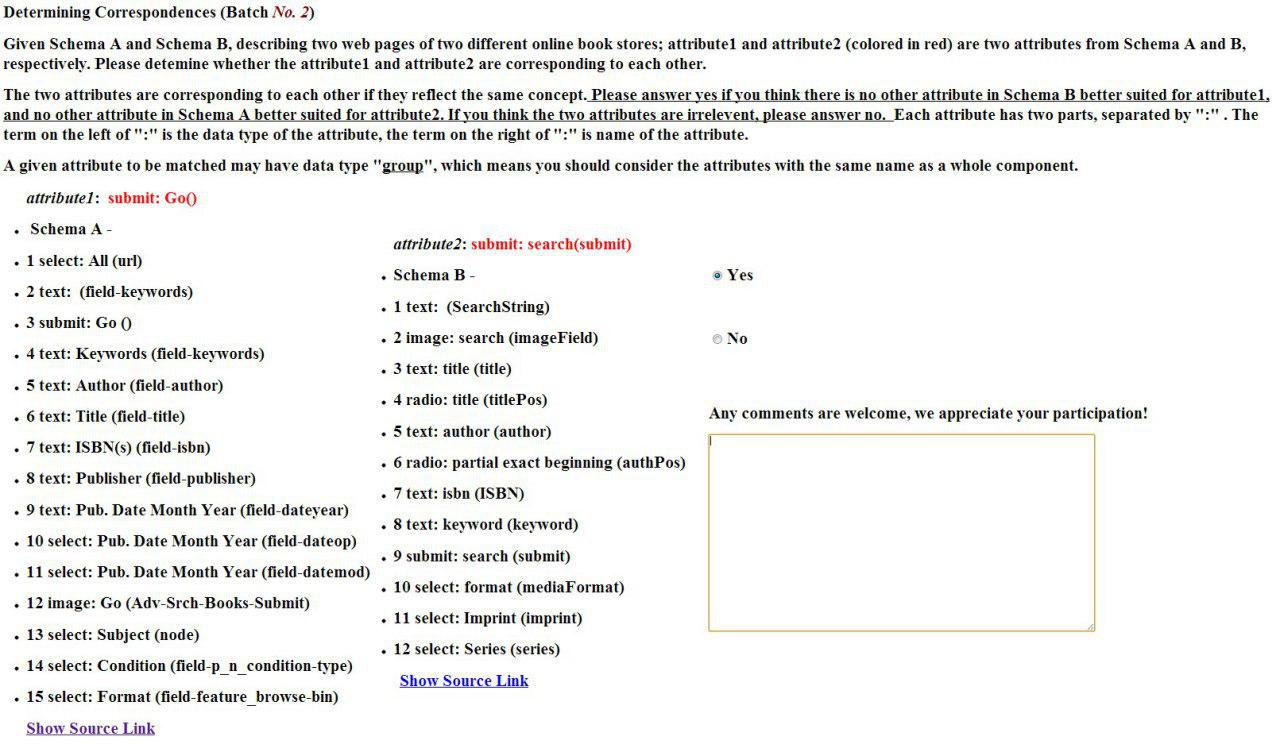}
    \caption{\footnotesize{screen shot of CCQ on AMT}}
\end{figure}

\end{document}